\documentclass[11pt,draftcls,onecolumn]{IEEEtran}
\usepackage{epsfig}
\usepackage{amsmath}
%\usepackage{algorithmic}
%\usepackage{algorithm}
%\input{Figeq.sty}

%%%%%%%%%%%%%%%%%%%%%%%%%%%%%%%%%%%%%
%        macros.tex                  %
%%%%%%%%%%%%%%%%%%%%%%%%%%%%%%%%%%%%%%

%From PhD thesis
\usepackage{amsmath, amssymb, bbm, xspace}
\usepackage{epsfig}
\usepackage{longtable}
\usepackage{color}
\usepackage{mathrsfs}
\usepackage{comment}
\usepackage{ifthen}
\newboolean{showcomments}
\setboolean{showcomments}{true}
\newcommand{\aak}[1]{  \ifthenelse{\boolean{showcomments}}
{ \textcolor{blue}{(AAK says:  #1)}} {}  }

\usepackage{courier}
%\usepackage{courier}
% \usepackage[T1]{fontenc}
% \usepackage[sc]{mathpazo}
% \linespread{1.05}         % Palatino needs more leading (space between lines)

% \usepackage{titlesec}
% \titlelabel{\qed \thetitle \quad}
% \titleformat*{\section}{\itshape}

% This abstract environment is a little narrower
% than the default.  Note the capital A...
% \newenvironment{Abstract}
%   {\begin{center}
%     \textbf{Abstract} \\
%     \medskip
%   \begin{minipage}{4.8in}}
%   {\end{minipage}
% \end{center}}

\newtheorem{theorem}{Theorem}[section]
\newtheorem{assumption}{Assumption}[section]
\newtheorem{lemma}[theorem]{Lemma}

\newtheorem{definition}{Definition}[section]

%\newcounter{theorem}[chapter]
% \renewcommand{\thetheorem}{\thechapter.\arabic{theorem}}
% 
% \renewenvironment{theorem}{\refstepcounter{theorem} {\ \\[2ex]}{\bf \large \textsf{Theorem \thetheorem}} \newline \it}{}

\def\bkE{{\rm I\kern-.17em E}}
\def\bk1{{\rm 1\kern-.17em l}}
\def\bkD{{\rm I\kern-.17em D}}
\def\bkR{{\rm I\kern-.17em R}}
\def\bkP{{\rm I\kern-.17em P}}

\def\Cbb{{\mathbb{C}}}

\def\bkZ{{\bf{Z}}}

\def\bfone{{\bf 1}}

\def\bfU{{\bf U}}
\def\bfC{{\bf C}}

\def\bkE{{\rm I\kern-.17em E}}
\def\bk1{{\rm 1\kern-.17em l}}
\def\bkD{{\rm I\kern-.17em D}}
\def\bkR{{\rm I\kern-.17em R}}
\def\bkP{{\rm I\kern-.17em P}}

\makeatletter
\newcommand{\pushright}[1]{\ifmeasuring@#1\else\omit\hfill$\displaystyle#1$\fi\ignorespaces}
\newcommand{\pushleft}[1]{\ifmeasuring@#1\else\omit$\displaystyle#1$\hfill\fi\ignorespaces}
\makeatother

\newcommand{\scg}{shared-constraint game\@\xspace} %%% i.e.,
\newcommand{\scgs}{shared-constraint games\@\xspace}
\newcommand{\Scgs}{Shared-constraint games\@\xspace}
\def\bkZ{{\bf{Z}}}
\def\b12{(\beta_1,\beta_2)}

\def\bfA{{\bf A}}

\newenvironment{example}{{\noindent \bf Example}}{\hfill $\square$\hspace{-4.5pt}\vspace{6pt}}
\newcounter{example}
\renewcommand{\theexample}{\thesection.\arabic{example}}
\newenvironment{examplec}[1][]{\refstepcounter{example}
\par\medskip \noindent%
   \textbf{Example~\theexample. #1} \rmfamily}{\hfill $\square$   \hspace{-4.5pt} \vspace{6pt}}

\newcounter{remark}
\renewcommand{\theremark}{\thesection.\arabic{remark}}

\newlength{\noteWidth}
\setlength{\noteWidth}{.75in}
\long\def\notes#1{\ifinner
{\tiny #1}
\else
\marginpar{\parbox[t]{\noteWidth}{\raggedright\tiny #1}}
\fi\typeout{#1}}

 \def\notes#1{\typeout{read notes: #1}} %uncomment for final version

\def\mi{^{-i}}
%\input{/home/ankur/latexfiles/algorithm2e.sty}
%  \def\proof{\penalty 25\noindent{\bf Proof.}\nobreak\hskip 1em\ignorespaces}

                 % math mode

\newcommand{\ie}{i.e.\@\xspace} %%% i.e.,
\newcommand{\eg}{e.g.\@\xspace} %%% e.g.,
\newcommand{\etal}{et al.\@\xspace} %%% e.g., Gill \etal (1986)

% Scientific Notation (E)

% Miscellaneous items
%\renewcommand{\vec}[1]{\ensuremath{\mathbf{#1}}}  %requires amsmath package

\newcommand{\Real}{\ensuremath{\mathbb{R}}}

\newcommand{\maximize}[1]{\displaystyle\maxim_{#1}}
\newcommand{\maxim}{\mathop{\hbox{\rm max}}}

\def\subject{\hbox{\rm s.t.}}

\def\Cbb{\mathbb{C}}

\def\Cbb{{\mathbb{C}}}

\def\exp{\mathop{\hbox{\rm exp}}}

\def\spose#1{\hbox to 0pt{#1\hss}}

\def\text #1{\hbox{\quad#1\quad}}

%%% Fixed-size glue, only for math mode

    %           thin space
    %         medium space
    %          thick space
\def\nthinsp{\mskip -2   mu}

%%% Specially lowered subscripts (see \sub above)

%%% Subscripts

%%% Superscript stars

\def\superstar{^{\raise 0.5pt\hbox{$\nthinsp *$}}}
\def\SUPERSTAR{^{\raise 0.5pt\hbox{$*$}}}
  %%% Less space after \xstar if followed by , or .

\def\lamstarT {\lambda^{\raise 0.5pt\hbox{$\nthinsp *$}T}}

%%% bars, hats, tildes  (\skew4 is the default)

%%%%%%%%%%%%%%%%%%%%%%%%%%%%%%%%%%%%%
%  Math italic and calligraphic fonts
%%%%%%%%%%%%%%%%%%%%%%%%%%%%%%%%%%%%%

\def\Fscr{{\cal F}}

\def\Iscr{{\cal I}}

\def\Lscr{{\cal L}}

\def\Nscr{{\cal N}}

\def\aur{\;\textrm{and}\;}

\def\non{\nonumber}
\def\SOL{{\rm SOL}}
\def\VI{{\rm VI}}

\def\VIs{{\rm VIs}}
\def\QVI{{\rm QVI}}

\def\GNE{{\rm GNE}}
\def\VE{{\rm VE}}

\def\SYS{{\rm (SYS)}}

\let\forallnew\forall
\renewcommand{\forall}{\forallnew\ }
\let\forall\forallnew

\def\ds{\displaystyle}
		\def\bkE{{\rm I\kern-.17em E}}
		\def\bk1{{\rm 1\kern-.17em l}}
		\def\bkD{{\rm I\kern-.17em D}}
		\def\bkR{{\rm I\kern-.17em R}}
		\def\bkP{{\rm I\kern-.17em P}}
		\def\bkY{{\bf \kern-.17em Y}}
		\def\bkZ{{\bf \kern-.17em Z}}
		\def\bkC{{\bf  \kern-.17em C}}
%		\usepackage[ruled,vlined,boxed]{algorithm2e}
		%\setlength{\baselineskip}{24pt}
		%\setlength{\topmargin}     { 0.0in}    % One inch top margins
		%\setlength{\headsep}       {24.0pt}    % with Header
		%\setlength{\textheight}    { 8.5in}
		%\setlength{\textwidth}     { 6.5in}    % One inch side margins
		%\setlength{\oddsidemargin} { 0.0in}
		%\setlength{\evensidemargin}{ 0.0in}
		%\setlength{\marginparsep}  { 0.0pt}    % No Margin notes
		%\setlength{\marginparwidth}{ 0.0pt}
		%\pagestyle{headings}

%%%%%%%%%%%%%%%%%%%%%
%Colours
%%%%%%%%%%%%%%%%%%%%%
%\definecolor{violet}{rgb}{.5,0.0,.80}
%\def\violet{{ \color{violet} }}
%\definecolor{dgreen}{rgb}{0,0.5,0.0}

%
{\begin{list}{}%
         {\setlength{\leftmargin}{#1}}%
         \item[]%
}
{\end{list}}

		% -- Environments --
		\def\bsp{\begin{split}}
		\def\beq{\begin{eqnarray}}
		\def\bal{\begin{align*}}
		\def\bc{\begin{center}}
		\def\be{\begin{enumerate}}
		\def\bi{\begin{itemize}}
		\def\bs{\begin{small}}
		\def\bS{\begin{slide}}
		\def\ec{\end{center}}
		\def\ee{\end{enumerate}}
		\def\ei{\end{itemize}}
		\def\es{\end{small}}
		\def\eS{\end{slide}}
		\def\eeq{\end{eqnarray}}
		\def\eal{\end{align*}}
		\def\esp{\end{split}}
		\def\qed{ \vrule height7.5pt width7.5pt depth0pt}  %width4.17pt depth0pt} 

\newcommand{\boxedeqnnew}[2]{\vspace{12pt}
 \noindent\framebox[\textwidth]{\begin{tabular*}{\textwidth}{p{.8\textwidth}p{.15\textwidth}}
                             \hspace{.05 \textwidth}    #1 & \raggedleft #2
                                \end{tabular*}
}\vspace{12pt}
 }

\newcommand{\boxedeqnsmall}[6]{\vspace{12pt}
 \noindent\framebox[#3 \textwidth]{\begin{tabular*}{#3 \textwidth}{p{#5 \textwidth }p{#6 \textwidth }}
                               \hspace{#4 \textwidth}  #1 & \raggedleft #2
                                \end{tabular*}
}\vspace{12pt}
 }

\def\maxproblem#1#2#3#4{\fbox
		 {\begin{tabular*}{0.80\textwidth}
			{@{}l@{\extracolsep{\fill}}l@{\extracolsep{6pt}}l@{\extracolsep{\fill}}c@{}}
				#1 & $\maximize{#2}$ & $#3$ & $ $ \\[5pt]
					 & $\subject\ $    & $#4$ & $ $
			\end{tabular*}}
			}

	\def\cp2problem#1#2#3#4{\fbox
		 {\begin{tabular*}{0.9\textwidth}
			{@{}l@{\extracolsep{\fill}}l@{\extracolsep{6pt}}l@{\extracolsep{\fill}}c@{}}
				#1 & & $#4 $ 
			\end{tabular*}}}

\newcommand{\pmat}[1]{\begin{pmatrix} #1 \end{pmatrix}}

		\def\bkE{{\rm I\kern-.17em E}}
		\def\bk1{{\rm 1\kern-.17em l}}
		\def\bkD{{\rm I\kern-.17em D}}
		\def\bkR{{\rm I\kern-.17em R}}
		\def\bkP{{\rm I\kern-.17em P}}
		
		\def\bkZ{{\bf{Z}}}

\newcommand {\beeq}[1]{\begin{equation}\label{#1}}
\newcommand {\eeeq}{\end{equation}}
\newcommand {\bea}{\begin{eqnarray}}
\newcommand {\eea}{\end{eqnarray}}

\def\texitem#1{\par\smallskip\noindent\hangindent 25pt
               \hbox to 25pt {\hss #1 ~}\ignorespaces}

%\def\example{{\bf Example :\ }}

%%% Local Variables: 
%%% mode: latex
%%% TeX-master: "report"
%%% End: 

\def\bsp{\begin{split}}
		\def\beq{\begin{eqnarray}}
		\def\bal{\begin{align*}}
		\def\bc{\begin{center}}
		\def\be{\begin{enumerate}}
		\def\bi{\begin{itemize}}
		\def\bs{\begin{small}}
		\def\bS{\begin{slide}}
		\def\ec{\end{center}}
		\def\ee{\end{enumerate}}
		\def\ei{\end{itemize}}
		\def\es{\end{small}}
		\def\eS{\end{slide}}
		\def\eeq{\end{eqnarray}}
		\def\eal{\end{align*}}
		\def\esp{\end{split}}
		\def\qed{ \vrule height7.5pt width7.5pt depth0pt}  %width4.17pt depth0pt} 

                        %%%%%%%%%%%%%%%%%%%%%%%%%%%%%%%%%%%%%
%        macros.tex                  %
%%%%%%%%%%%%%%%%%%%%%%%%%%%%%%%%%%%%%%

\usepackage{amsmath, amssymb, xspace}
\usepackage{epsfig}
\usepackage{longtable}
\usepackage{color}
\usepackage{mathrsfs}
\usepackage{subfig}
\newenvironment{proof}[1][]{{\noindent \emph {Proof} #1: }}{\hfill \qed \vspace{3pt}\\ }

\def\Nscr{{\cal N}}
\def\subject{\hbox{\rm subject to}}

%\newcommand{\minim}{\mathop{\hbox{\rm min}}}
%\newcommand{\maximize}[1]{\displaystyle\maxim_{#1}}
%\newcommand{\maxim}{\mathop{\hbox{\rm maximize}}}

%\newcommand{\minimize}[1]{\displaystyle\minim_{#1}}

%			\def\problemsmall#1#2#3#4{\fbox
%		 {\begin{tabular*}{0.8 \textwidth}
%			{@{}l@{\extracolsep{\fill}}l@{\extracolsep{-3pt}}l@{\extracolsep{\fill}}c@{}}
%				#1 & & $\minimize{#2}$ $#3$ & $ $ \\[2pt]
%				$\sub\ $	 &$ $     & $#4$ & $ $
%			\end{tabular*}}
%			}

%\def\maximproblemsmalla#1#2#3#4{\fbox
%		 {\begin{tabular*}{0.49\textwidth}
%			{@{}l@{\extracolsep{\fill}}l@{\extracolsep{-4pt}}l@{\extracolsep{\fill}}c@{}}
%				#1 &  & $\maximum{#2}$  $#3$ & $ $ \\[4pt]
%					  $\sub \ $  &   & $#4$ &  $ $
%			\end{tabular*}}
%			}		
%			\def\maxproblemlarge#1#2#3#4{\fbox
%		 {\begin{tabular*}{1.0\textwidth}
%			{@{}l@{\extracolsep{\fill}}l@{\extracolsep{-2pt}}l@{\extracolsep{\fill}}c@{}}
%				#1 & $\max{#2}$ & $#3$ & $ $ \\[3pt]
%					 & $\sub $    & $#4$ & $ $
%			\end{tabular*}}
%			}

%%%%%%%%%%%%%%%%%%%%%%%%%%%%%
% 	bb, bf and bk
%%%%%%%%%%%%%%%%%%%%%%%%%%%%%

%\parskip 1.75\parskip plus 3pt minus 1pt
\parskip+1pt

\usepackage{epsfig}
\usepackage{longtable}
\usepackage{color}

\setlength{\oddsidemargin}{-.20in}
\setlength{\evensidemargin}{-.20in} \setlength{\textwidth}{6.8in}
\setlength{\topmargin}{-0.3in} \setlength{\textheight}{8.9in}
\pagenumbering{arabic}
\begin{document}
\title{The Efficiency of Generalized Nash and Variational Equilibria} 
\author{Ankur A. Kulkarni\thanks{Systems and Control Engineering, Indian Institute of Technology Bombay, Mumbai, India 400076. The author can be contacted at \texttt{\small kulkarni.ankur@iitb.ac.in}, or +91-9167889384 (phone) or +91-2225720057 (fax)}}
\date{}
\maketitle
\begin{abstract}
Shared-constraint games are noncooperative $N$-player games where players are coupled through a common coupling constraint. It is known that such games admit two kinds of equilibria -- generalized Nash equilibria (GNE) and variational equilibria (VE) -- with two different economic interpretations. 
We consider such games in the context of resource allocation, where players move simultaneously to decide portions of the resource they can consume under a coupling constraint that the sum of the portions they demand be no more than the capacity of the resource. We clarify the worst case and best case efficiency of these kinds of equilibria over all games in a class. We find that the worst case efficiency of both solution concepts in zero and the best case efficiency is unity. Moreover, we characterize the subclass of games where all VE are efficient and show that even in this subclass but the worst case efficiency of GNE is zero. We finally discuss means by which zero worst case efficiency can be remedied.
\end{abstract}

\section{Introduction}
Generalized Nash games with shared constraints are noncooperative $N$-player games in which the strategies available 
to each player are constrained by the requirement that the $N$-tuple of player-strategies belong to a common set. 
This work considers shared-constraint games which arise from the competition for a finite resource, 
such as bandwidth or energy. The shared constraint is the common requirement that the portions of the 
resource allocated to all players must total to not exceed the available quantity of the resource. 
We investigate and clarify the efficiency of the equilibria of these games with respect to a socially desirable  
outcome. 

Our game may be thought of as a {\em resource allocation game}, though this term has acquired a somewhat more 
specific meaning~\cite{johari04efficiency} that does not commonly allude to our setting. 
The traditional  resource allocation game has players that compete 
in a noncooperative fashion to access the resource, but their competition is mediated by a {\em mechanism}. 
A mechanism accepts bids from players and allocates the resource to the players in exchange for a payment, thereby also maintaining feasibility of the allocation. Since players may be price anticipating, the choice of a player's bid is strategically coupled 
to the bids of other players, leading to a game in which the strategy of each player is its bid, its constraint is the space of its bids 
and its objective is to maximize its utility less the payment charged by the mechanism. In contrast, our 
game is played in the space of resource portions, the players are cognizant of the finiteness of the resource 
and their objective is to maximize their utility.

 In a resource-allocation game, each player desires to 
maximize its utility. But from the {system-perspective}, the 
goal is to allocate resources in a manner that a prescribed societal objective, characterized by the 
optimization of a {\em welfare function},  is met. A natural and commonly employed
societal objective is the optimzation of {\em aggregate} utility or sum of utilities of all players, \ie the maximization of social welfare, 
though other objectives could be chosen based on the setting. 
When utilities are measured in monetary terms and arbitrary money transfers are permitted across agents, 
aggregate utility maximization also equivalent to the classical concept of {\em Pareto efficiency}\footnote{A Pareto efficient allocation is an allocation, by deviations from which the value derived by any player cannot be strictly increased without simultaneously resulting in strictly 
lower value for another player.
}~\cite{johari04thesis} or {\em economic efficiency}. We say that an allocation is {\em efficient} if it implements the aggregate utility 
and we call the {\em efficiency} of an allocation, the ratio of the aggregate utility for this allocation to that of an efficient allocation.

% The question of allocation of resources is thus constituted by two levels of optimization problems, one at the level of the players 
% and the other at the level of the system. At the level of the players, 
% each player attempts to optimize its utility and at the system-level, there is a social planner's or system 
% administrator's problem of optimizing efficiency. Both problems are constrained. Each player faces 
% the constraint of the admissible bids. The social planner faces the constraint of allocating resources without 
% exceeding capacity. 

The allocation corresponding to the Nash equilibrium of the game is, in 
general, only feasible for the social planner's problem of maximizing aggregate utility. 
Thus these allocations are, in general, 
not efficient. The question of resource allocation is therefore one with dual, and  
possibly conflicting objectives of efficiency and competition. The usual approach to this dilemma gives primacy to competition while 
attempting to control the loss in efficiency. Specifically, when allocating resources through a mechanism, the effort is to 
design a mechanism that provides a guarantee of low efficiency loss at its Nash equilibrium. This is the central pursuit 
of the field of {\em mechanism design}~\cite{narahari09game}, though the societal objective to be met differs according 
to situation.

Our game 
%, though dealing with allocation of resources, 
represents a departure from this approach. 
Though we also have the dual objectives of competition and efficiency, our game is relevant if one makes the assumption that the 
option of a mechanism does not exist. In our setting, players move simultaneously, in a noncooperative manner and 
compete for portions of the resource. A shared-constraint game naturally fits such an 
interaction. The goal behind our work is not to suggest this game as an alternative to mechanism  driven approaches
to resource allocation, but instead to analyze the specific setting where it is relevant and to study 
the efficiency of equilibria that result from it. 
The subtlety though is that shared-constraint games admit two kinds of solution concepts, 
generalized Nash equilibrium (\GNE) and the variational equilibrium (\VE), with 
two different economic interpretations (see \cite{kulkarni09refinement} and~\cite{kulkarni12revisiting}). In~\cite{kulkarni09refinement}, a theory was presented to argue that the \VE can be  considered a refinement of the \GNE for a large class of games. In the light of this, our goal is also to do a
comparative study of 
these classes of equilibria with respect to the metric of efficiency. 
Though our work is not intended as an alternative to other mechanism-based approaches, it contributes to 
identifying settings where the VEs of this \scg have higher efficiency 
than the equilibria induced by mechanisms. 

% The distinction between the GNE and the VE and was discussed 
% in detail in \cite{kulkarni09refinement}.
 We discuss the efficiency of these equilibria separately. 
In particular, we are concerned with the best case efficiency and the worst case efficiency of the GNE 
and the \VE. Ours is, to the best of our knowledge, the only work on the efficiency of equilibria in general 
\scgs, though the social welfare of the equilibria of these games has been considered in the setting of
congestion control~\cite{alpcan02game}. A detailed comparison with other work is encompassed in Section~\ref{sec:relation}. 

We consider two kinds of utility functions. In the first kind, the utility derived by any player is a function only of 
the allocation it receives. In the second, and more general setting, utilities are dependent on allocations received by other players. 
The first setting has an interesting interpretation in terms of the {\em competitive} equilibrium. We also see that 
the VE is efficient in this setting, whereas the GNE can be arbitrarily inefficient.  
In the more general setting we are concerned with the best case efficiency and worst case efficiency over a class of 
utility functions. We characterize settings under which full efficiency is obtained and identify that the solution 
concept of the VE achieves this unit efficiency. We also show that a departure from this setting can lead to arbitrarily 
low efficiency. Specifically, if one considers the GNE as a solution concept, for settings where the VE is efficient, 
one can get arbitrarily low efficiency. And a departure from the ``efficient'' setting of the VE can lead to arbitrarily low 
efficiency for the VE. 

We then suggest some ways in which this arbitrarily low efficiency may be remedied. 
One of the causes of low efficiency is the possibility of wide variation in the marginal utilities of players. 
A more restricted class of utility functions 
% in which the gradient map of every member function is
% bounded away from zero and from above uniformly over the domain, 
gives a more favorable worst case efficiency. 
% The choice of the social choice function  is not 
% universal and notions of efficiency with respect to other system 
% problems may be more relevant in other settings. 
We consider a setting, inspired by \cite{alpcan02game}, with a slightly different social choice function. 
This setting is a game where players incur costs that, from the system point of view, are not additive. 
Thus the social choice function 
is not merely the sum of the objectives of all players. We characterize utility functions
for which the VE is efficient under this notion of efficiency. Finally we consider the imposition of a reserve price on
players. The reserve price has the effect of eliminating players with low interest in the resource.  
The GNE of the resulting game is more indicative of the system optimal solution. We find that under certain conditions, efficiency
as high as unity is obtainable by the imposition of an appropriate reserve price.

% Finally, we study 
% a generalized Nash-Cournot game for resource allocation and compute the efficiency for this game.

In Section~\ref{sec:preliminaries} we introduce the model and our assumptions. 
In Section~\ref{sec:relation} we clarify the relationship of our approach with the competitive equilibrium 
and other approaches that use mechanisms. Following that we give a general bound on the efficiency by showing that 
In Section~\ref{sec:best} we characterize the class of games where the VE is efficient and show that the GNE can be arbitrarily
inefficient in this class. Section~\ref{sec:remedy} some ways to lower bound the efficiency loss and some other models are considered. 
We conclude with some final considerations in Section~\ref{sec:final}.

\section{Preliminaries} \label{sec:preliminaries}
In this section we describe the setting for our resource allocation game 
and our notions of efficiency. We mention some mathematical characterizations that will be used in later sections. 

Let $\Nscr = \{1,2,\hdots,N\}$ be a set of players. 
For each $i \in \Nscr$, let $x_i \in \Real$ be player $i$'s strategy and  $\varphi_i : \Real^N \rightarrow \Real$ be his 
utility function. By $x$ we denote the tuple $(x_1,x_2,\hdots,x_N)$,   
 $x^{-i}$ denotes the tuple $(x_1,\hdots,x_{i-1},x_{i+1}, \hdots, x_N)$ and $(y_i,x^{-i})$ 
the tuple $(x_1,\hdots,x_{i-1},y_i,x_{i+1},\hdots,x_N)$. 
Let $\Phi$ denote the tuple $(\varphi_1,\hdots, \varphi_N)$.
The {\em shared constraint} in our game 
is the  requirement that the tuple $x$ be constrained to lie in a set 
$\Cbb \subseteq \Real^N$,
$$ \Cbb = \left\{ x \geq 0\ \left\lvert \ \sum_{j \in \Nscr} x_j \leq C \right. \right\},$$  
where $C$ is a positive real number 

A tuple $x \in \Cbb$ is termed an {\em allocation} and the number $C$ represents the capacity of the resource. 
We consider an model with a single resource, whereby $x_i$ is the portion of the resource demanded by player $i$. 
In the generalized Nash game with shared constraint $\Cbb$, player $i$ is assumed
to solve the {parameterized} optimization problem, 
	$$
	\maxproblem{A$_i(x^{-i})$}
	{x_i}
	{\varphi_i(x_i;x^{-i})}
				 {\begin{array}{r@{\ }c@{\ }l}
		\sum_{j\in \Nscr} x_j &\leq& C, \\
		x_i & \geq& 0.
	\end{array}}
	$$
In other words, each player decides a portion of the resource so as to maximize his utility, but he does so while being 
cognizant of the fact that portion he has access to is that which when added to the portions of other players, does not 
exceed $C$. 
Throughout this paper we abbreviate the aggregate utility function by 
\begin{equation}
\Theta \triangleq \sum_{j \in \Nscr} \varphi_j, \label{eq:thetanote} 
\end{equation}
and make the following assumption.
\begin{assumption} 
We assume that for each $i \in \Nscr$, the utility function $\varphi_i(x)$ is a concave, continuously differentiable  
and strictly increasing function in $x_i$. Furthermore, the utility obtained from the allocation $x=0$, \ie $\varphi_i(0)$,  is nonnegative for all $i$. 
Finally, the aggregate utility $\Theta$ is a concave function such that 
for each allocation $x$, $\nabla \Theta(x)$ is component-wise nonnegative. \label{5-assump}
\end{assumption}

Following are the motivations behind these assumptions. It is common to assume that utility is an increasing 
function of the portion, and that {\em marginal utility} is nonincreasing. This leads one to an assumption that
of each $\varphi_i$ is increasing and strictly concave in $x_i$. The other assumptions, 
in particular the concavity of $\Theta$ and nonnegativity of $\nabla \Theta$  
may not hold in all cases. The nonnegativity of $\nabla \Theta$ and the concavity of $\Theta$ 
says that utility functions of
players are such that for the 
social planner, withholding portions of the resource so as to not exhaust the capacity $C$, is not optimal. 
% The concavity of 
% $\Theta$ is a technical assumption that helps simplify our analysis. 
We note that these assumptions are compatible wth the usual 
case where for all $i$, $\varphi_i(x) = U_i(x_i)$ for all $x$ and 
that they are met by quasi-linear utility functions: $\varphi_i(x) = U_i(x_i) - \sum_{j \neq i} d^i_j
x_j$ commonly used in economics~\cite{mas-colell95microeconomic}.

We recall the solution concepts of the generalized Nash equilibrium 
(GNE) and variational equilibrium (VE) from~\cite{kulkarni09refinement,pang09convex,facchinei07generalized}. 
For this game, these concepts reduce to the following.
\begin{definition}[Generalized Nash equilibrium (GNE)]
A tuple $x$ is a generalized Nash equilibrium if there exists a tuple $\Lambda = (\lambda_1,\hdots,\lambda_N) \in \Real^N$ such that 
$x$ and $\Lambda$ satisfy the KKT conditions of the optimization problems {\rm (A$_1),\hdots,($A$_N$)}:
\begin{align*}
0 \leq x_i &\perp -\nabla_i\varphi_i(x) + \lambda_i \geq 0 \\
0 \leq  \lambda_i & \perp C - \bfone^T x \geq 0, \qquad \forall \ i \in \Nscr.
\end{align*}
The set of all GNEs of this game with objective functions $\Phi$ is denoted by $\GNE(\Phi)$.
\end{definition}
Here $\bfone$ denotes a vector in $\Real^N$ with each coordinate unity. 
The GNE that corresponds to equal Lagrange
multipliers, \ie 
$\Lambda = \lambda \bfone$ for some $\lambda$ is the VE.
\begin{definition}[Variational equilibrium (VE)]
A tuple $x$ is a variational equilibrium if there exists $\lambda \in \Real$ such that $x$ satisfies the KKT 
conditions of all  the optimization problems {\rm (A$_1),\hdots,($A$_N$)} with 
$\lambda$ as the Lagrange multiplier. \ie
 \begin{align*}
0 \leq x_i &\perp -\nabla_i\varphi_i(x) + \lambda \geq 0 \\
0 \leq  \lambda & \perp C - \bfone^Tx \geq 0, \qquad \forall \ i \in \Nscr.  
 \end{align*}
 The set of all VEs of this game with objective functions $\Phi$ is denoted as $\VE(\Phi)$.
\end{definition}
As a consequence of the compactness of $\Cbb$, any game with utility
 functions satisfying Assumption~\ref{5-assump}, 
the set of VEs (and hence the set of GNEs) is nonempty. The proof of this 
fact can be found, \eg, in Rosen~\cite{rosen65existence}.

It was argued in~\cite{kulkarni09refinement} that the VE may be considered a refinement of the GNE. The main argument rests on the observation that the Lagrange multipliers $\lambda_1,\hdots,\lambda_N$ in the definition of the GNE 
can be interpreted as \textit{prices} charged by an administrator for consumption of the resource. As such any specific GNE can be realized by the choice of these multipliers. A GNE that is not a VE corresponds to the imposition of nonuniform prices across players by the administrator, whereas the VE corresponds to uniform prices. However, nonuniform prices necessarily require the administrator to be able to distinguish between the players. If players are anonymous, only uniform prices are feasible and as such the only GNE that can be meaningful is the VE.

The goal of this work is to study the efficiency of these solution concepts. As mentioned in the introduction, 
our notion of efficiency is the maximization of aggregate utility. 
\begin{definition}
A point $x$ is said to be efficient if it solves the following optimization problem
 	$$
	\maxproblem{{\rm SYS}}
	{x}
	{\Theta(x)}
				 {\begin{array}{r@{\ }c@{\ }l}
		\sum_{j\in \Nscr} x_j &\leq& C, \\
		x_j & \geq& 0, \qquad \forall \ j \in \Nscr. 
	\end{array}}
	$$
The efficiency of a point $x$ is defined as the ratio 
$\ds \frac{\Theta(x)}{\Theta(x^*)},$ where $x^*$ is a solution of \SYS.
\end{definition}
By Assumption~\ref{5-assump}, \SYS\  is a convex optimization problem. 
It follows that $x$ is efficient if there exists $\lambda \in \Real$ such that 
\begin{align*}
0 \leq x & \perp -\nabla \Theta(x) + \lambda {\bf 1} \geq 0 \\
0 \leq \lambda &\perp C - \sum_{j \in \Nscr} x_j \geq 0
\end{align*}

An alternative characterization of the GNEs and VEs is through the use of quasi-variational 
inequalities and variational inequalities respectively. 
To define  these objects, let 
$F: \Real^N \rightarrow \Real^N$ be the following function 
$$F(x) = -\pmat{\nabla_1 \varphi_1(x) \\ \vdots \\ \nabla_N \varphi_N(x)}, \quad \forall \ x \in \Real^N,$$
and $K$ be the set-valued map
 \begin{align}
 K(x) &:= \prod_{i \in \Nscr} K_i(x^{-i})  \quad \mbox{where} \quad K_i(x^{-i}):= \left\{ y_i \in \Real^{m_i} \ | \ (y_i, x^{-i}) \in  \Cbb\right\}, \quad \forall i \in \Nscr, \forall\ x \in \Real^N. \label{5-kxdef}
 \end{align}
A allocation $x$ is a GNE if and only if $x$ solves the quasi-variational inequality \QVI$(K,F)$ below.

\boxedeqnnew{Find $x \in K(x)$ \quad such that \quad  $F(x)^T (y - x) \geq 0$\quad $\forall \ y \in K(x)$.
          }{(\QVI$(K,F)$)}
Likewise, $x$ is a VE if and only if it solves the variational inequality \VI$(\Cbb,F)$.

\boxedeqnnew{Find $x \in \Cbb$ \quad such that  \quad  $F(x)^T (y - x) \geq 0$ \quad $\forall \ y \in \Cbb$.
          }{(\VI$(\Cbb,F)$)}
Since under Assumption~\ref{5-assump} \SYS\ is a convex optimization problem, efficient allocations 
are characterized by the solutions of the $\VI(\Cbb, -\nabla \Theta)$,

\boxedeqnsmall{Find $x \in \Cbb$ \quad such that  \quad  $-\nabla \Theta(x)^T (y - x) \geq 0$ \quad $\forall \ y \in \Cbb$.
          }{(\VI$(\Cbb,-\nabla\Theta)$)}{1}{.05}{.7}{.25}

The efficiency of an allocation depends on utility functions considered. 
In order to provide guarantees of efficiency of an allocation one needs to consider classes of utility functions 
and examine the worst (or best) case of efficiency over all of them. Therefore to provide efficiency guarantees for a {\em solution 
concept}, one needs to examine the worst (or best) case of efficiency over allocations generated by the said solution concept
for each utility function in the class. For this purpose denote by $\Fscr$ the class of utility functions that 
satisfy Assumption~\ref{5-assump}:
$$ \Fscr = \{ \Phi \ | \  \varphi_1,\hdots, \varphi_N \ \mbox{satisfy Assumption~\ref{5-assump}} \}. $$
Let $\Lscr$ be the subclass of $\Fscr$ comprising of {\em linear} objective functions:
$$ \Lscr = \{ \Phi  \ | \  \varphi_1,\hdots, \varphi_N \ \mbox{are linear and satisfy Assumption~\ref{5-assump}} \}. $$
Following are our notions of best case and worst case efficiency for the GNE and VE respectively. 
\begin{definition}
 The best case efficiency of the GNE and the VE are defined as 
\begin{align*} \bar{\rho} &= \sup_{x \in \GNE(\Phi),\: \Phi \in \Fscr} \frac{\Theta(x)}{ \max_{z \in \Cbb} \Theta(z)}, \\
\aur \quad \bar{\vartheta} &= \sup_{x \in \VE(\Phi),\: \Phi \in \Fscr} \frac{\Theta(x)}{\max_{z\in \Cbb}\Theta(z)},
\end{align*}
respectively.
\end{definition}
\begin{definition}
 The worst case efficiency of the GNE and the VE are defined as 
\begin{align*} \underline{\rho} &= \inf_{x \in \GNE(\Phi),\: \Phi \in \Fscr} 
\frac{\Theta(x)}{ \max_{z \in \Cbb} \Theta(z)}, \\
\aur \quad \underline{\vartheta} &= \inf_{x \in \VE(\Phi),\: \Phi \in \Fscr} \frac{\Theta(x)}{\max_{z\in \Cbb}\Theta(z)},
\end{align*}
respectively.
\end{definition}

By Assumption~\ref{5-assump}, $\Theta(0) \geq 0$ and $\nabla \Theta \geq0$. This ensures that $\Theta$ is nonnegative on
$\Cbb$ 
and thereby the solution of \SYS\ is positive and finite, whereby the efficiencies defined above are all finite and nonnegative. 
In other words, $\bar{\rho}, \underline{\rho}, \bar{\vartheta}, 
\underline{\vartheta} \in [0,1],$ and that $\bar{\rho} \geq \underline{\rho}$ and $\bar{\vartheta} \geq \underline{\vartheta}$.   
Furthermore, since every VE is a \GNE, we readily get 
$$\bar{\rho} \geq \bar{\vartheta}, \qquad \mbox{and} \qquad \underline{\rho} \leq \underline{\vartheta}.$$

\section{Relation to past work} \label{sec:relation}
The appropriate allocation of resources is perhaps the fundamental concern that led to the inquiries that 
we today recognize as being part of the field of economics. Adam Smith's 
classic, {\em The Wealth of Nations}~\cite{smith08inquiry}, first published back in 
1776, contains perhaps the first scholarly attempt at understanding and explaining how societies allocate resources. 
Smith's observations, which say that competition and efficiency go hand in hand, are the foundations of 
{\em general equilibrium theory} and the {\em welfare theorems}~\cite{mas-colell95microeconomic,walras52elements}: 
every competitive equilibrium maximizes social welfare and 
under certain conditions, every social welfare maximizing allocation is achievable as a competitive equilibrium. 

These theorems rely on the assumption of {\em perfect competition} which states that players 
do not strategically anticipate the impact of their actions on the actions of other players~\cite{mas-colell80noncooperative}. 
Things changed in the mid-1940's with the invention of game theory~\cite{neumann44games}. Game theory provided 
a formal paradigm for studying the more general and, in some cases, somewhat more realistic setting where 
players were allowed to be strategic and {\em price anticipating}. In game-theoretic parlance, social welfare maximization was 
seen as the outcome of a {\em cooperative} game. By considering the Nash equilibrium as the outcome of the noncooperative 
counterpart, it was easily seen that the welfare theorems need not hold, \ie cooperative and noncooperative games yield 
patently different equilibria. The classic game of {\em Prisoner's Dilemma}~\cite{mas-colell95microeconomic} 
is a telling example of this fact. The noncooperative interaction yields both prisoners significantly more years in prison than a 
decision they could have achieved had they cooperated. In fact a  theorem of Dubey~\cite{dubey86inefficiency} rigorously establishes that 
Nash equilibria are generically (for an open dense subset of the space of 
utility functions) inefficient in the Pareto sense\footnote{See also Maskin~\cite{maskin99nash} which gives sufficient conditions 
for a social choice rule to implementable by a game and the Pigou example~\cite{nisan07algorithmic}}.

Efficiency has also been a question in mechanism design. Though Nash equilibria can be inefficient, in the context of mechanism design, 
these Nash equilibria correspond to the game in the space of bids induced by the mechanism. In this setting, 
more encouraging results are seen. The Vickrey auction~\cite{vickrey61counterspeculation}, is an success story of this approach, for it 
is known to be efficient\footnote{The Vickrey auction pertains to the auctioning of a {\em single} indivisible good. Efficiency  
in this context means that the good is allocated to the bidder who values it most.} and has other attractive properties. The last two decades in particular has seen a dramatic surge in 
the interest in such questions, especially in the computer science and electrical engineering community. 
Most of this interest find its origins in the seminal work of Kelly \etal~\cite{kelly98rate} in which 
congestion or rate control in a communication network, such as the internet, 
was modelled as a resource allocation problem. Kelly \etal introduced a model in which each player 
submitted a ``bid'' and received an allocation proportional to its bid and showed that the competitive equilibrium 
of this mechanism was efficient. Since then Johari~\cite{johari04thesis}, Roughgarden 
\cite{roughgarden02selfish,roughgarden04bounding}, Papadimitriou~\cite{papadimitriou01algorithms} and many others 
have worked on  bounding the efficiency loss in resource allocation or related games. Indeed, Papadimitriou 
has called the worst case efficiency as the {\em price of anarchy}. The volume of work 
in this area is too large to be cited here to any degree of completeness. We instead refer the reader to the above references 
and to the book by Nisan~\etal~\cite{nisan07algorithmic} for more. 

Our work is distinct from all the other work on resource allocation. But thanks to this long and rich history of 
resource allocation, it does bear similarities to some of the past approaches. We highlight the distinctions here and 
the relation other approaches 
is explained in the following sections. The first distinction between our approach and any of the past approaches is the use 
of a shared-constraint game. 
Though shared-constraint game formulation appeals naturally to resource allocation, to the best of 
our knowledge there is no work directly shared-constraint games as a vehicle for the abstract question of 
resource allocation. 
There is however work in the specific context of congestion in networks~\cite{alpcan02game}. Secondly, our work does not 
attempt to provide an alternative to mechanisms, but instead considers the setting where the option of a mechanism is not 
available. In our setting, players wish to split a pie $N$-ways without communication with each other and without the intervention 
of an administrator while simultaneously moving to get pieces of it. This is perhaps an example of a somewhat  
extreme {\em anarchy} in resource allocation. Finally, the shared-constraint game admits to kinds of equilibria (GNE and VE) and 
with different consequences on efficiency. We want to know how they perform in the context of efficiency. This 
question has not been studied before.

The varied nature of the equilibria of the shared-constraint game has the potential to yield extremely high efficiency as well 
as extremely low efficiency and therein lies the possibility for similarities with other approaches. 
With utility functions of the ``perfectly competitive'' kind, \ie the utility obtained by a player is a function only 
of the allocation received by it, the VE is seen to be identical to the competitive equilibrium. We elaborate on this 
comparison in Section~\ref{sec:competitive}. In Section~\ref{sec:mechanism}, we compare our approach for this same 
model with a typical mechanism-based approach. 

% Having said this, as a corollary of our work, we obtain settings where 
% allocating resources through a shared-constraint game may be more efficient than allocating them using a mechanism. 

\subsection{Relation to the competitive equilibrium} \label{sec:competitive}
Consider a setting with one {\em seller} and $N$ buyers and a single resource with capacity $C$. 
Buyers are similar to what we have called {\em players} 
in this work; they are characterized by utility functions $\varphi_i(x) = U_i(x_i)$ for all $x$ and $i \in \Nscr$, 
which we assume satisfy Assumption~\ref{5-assump}. 
Imagine that each buyer sees a unit price $p$ for the resource. Taking this price for granted, the buyer decides
a quantity to buy so as to maximize his net payoff  which is his utility less the payment he makes. Thus buyer $i$ is faced with the following optimization problem.
$$
        \maxproblem{B$_i$}
        {x_i}
        {U_i(x_i) -px_i}
                                 {\begin{array}{r@{\ }c@{\ }l}
                x_i & \geq& 0.
        \end{array}}
        $$
The seller, on the other hand, seeks to maximize revenue. The sellers decision variables are the quantity sold 
to each buyer and he is constrained by the capacity of the resource. The seller also takes this price for granted 
in making his decision and is therefore faced with the following optimization problem.
$$
        \maxproblem{S}
        {x}
        {p(\bfone^Tx)}
                                 {\begin{array}{r@{\ }c@{\ }l}
                \bfone^Tx & \leq& C,\\
                  x &\geq & 0.
        \end{array}}
        $$
Now imagine a societal objective of maximizing the aggregate utility of all buyers\footnote{This is also the {\em aggregate surplus} 
of buyers and sellers. See~\cite{johari04thesis} for more on this.}, which corresponds to the following problem.
$$
        \maxproblem{SOC}
        {x}
        {\sum_{i \in \Nscr}  U_i(x_i)}
                                 {\begin{array}{r@{\ }c@{\ }l}
                \bfone^Tx & \leq& C,\\
                  x &\geq & 0.
        \end{array}}
        $$
The setting described above is called the perfectly competitive setting. Notice that buyers and sellers {\em simultaneously} 
the quantities $x_1,\hdots,x_N$. The ``price'' is assumed to be exogenously given\footnote{Akin to Adam Smith's {\em Invisible Hand}} and in addition there is a societal objective maximizing social welfare with the same decision variables. The first welfare theorem states that there exists a price $p$ such that all the above optimzation problems are consistently solved. \ie 
there exists a price $p$ and an allocation $x$ so that $x_i \in \SOL({\rm B}_i)$, for all $i \in \Nscr$,  
$x \in \SOL({\rm S})$ and $x \in \SOL({\rm SOC})$. 
The evidence for this lies in the fact that the KKT conditions of the problem (SOC) 
which say that $x$ solves (SOC) if there exists a $\lambda \in \Real$ such that 
\begin{align*}
 0 \leq x_i &\perp -\nabla_i U_i(x_i) + \lambda \geq 0,  \\
 0 \leq \lambda & \perp C - \bfone^T x \geq 0, \quad \forall  \ i \in \Nscr,
\end{align*}
 are identical to the KKT conditions of (B$_1),\hdots,$(B$_N$) and (S) taken together with $p = \lambda$. 
 This $x$ is called the {\em competitive equilibrium}\footnote{Oftentimes the tuple $x,p$ is called the competitive equilibrium}, 
 and the first welfare theorem states that the competitive equilibrium is efficient.

Now consider a shared constraint game with these buyers as players. Player $i$ solves the problem 
$$
        \maxproblem{A$_i(x\mi)$}
        {x_i}
        {U_i(x_i)}
                                 {\begin{array}{r@{\ }c@{\ }l}
                \bfone^Tx & \leq& C, \\
                x_i & \geq & 0.
        \end{array}}
        $$
Consider the solution concept of the VE. 
The allocation $x$ is a VE for this game if and only if there exists $\lambda$ 
such that 
\begin{align*}
 0 \leq x_i &\perp - \nabla U_i(x_i) + \lambda  \geq 0, \\
 0 \leq \lambda & \perp C - \bfone^T x \geq 0, \quad \forall  \ i \in \Nscr.
\end{align*}
Herein lies the connection between the competitive equilibrium and our model. For utility functions of the kind considered 
in perfectly competitive settings, the solution concept of the VE provides an allocation 
identical to that of a competitive equilibrium and is thereby efficient. Thus one may say 
that at a VE, players demand resources {\em as if} they were sold by a seller with a fixed price.

\subsection{Relation to mechanism design}\label{sec:mechanism}
Resource allocation through a shared-constraint game is different from that through 
the use of mechanisms. To clarify this difference we compare our approach with the approach of Johari~\cite{johari04thesis,johari04efficiency}, which 
may be taken to be a canonical mechanism-based approach to resource allocation.

\begin{examplec} \label{ex:johari}
Johari considers the utility functions from the perfectly competitive setting 
$\varphi_i(x) = U_i(x_i)$ for all $x$. Every player submits a {\em bid}, or willingness to pay to a system administrator. 
Let the bids of the players be $w_1,\hdots,w_N$ where each $w_i \in [0,\infty)$. The system administrator aggregates these bids and allocates a portion of the resource according to an allocation rule. For \eg, Johari uses the 
{\em proportional} allocation rule wherein player $i$ gets a portion $x_i$
$$ x_i = \frac{w_i}{\sum_{j \in \Nscr} w_j} C.$$
He is charged a payment $w_i$. The price for a unit of the resource is $\sum_{j \in \Nscr} w_j /C$. 
When players are price taking they choose their bids taking $p = \sum_{j\in \Nscr}w_j/C$ as constant 
and receive a quantity $x_i = w_i/p$. It is easy to show that this is allocation of the competitive equilibrium~\cite{kelly97charging}.

When players are price-anticipating, \ie they strategically anticipate the influence of their bids and 
their opponent's bids on the price, their interaction can be modelled as a game. 
	$$
	\maxproblem{J$_i(w\mi)$}
	{w_i}
	{U_i\left(\frac{w_i}{\sum_{j\in \Nscr}w_j}C\right) -w_i}
				 {\begin{array}{r@{\ }c@{\ }l}
		w_i & \geq& 0.
	\end{array}}
	$$
Let $w^* = (w_1^*,\hdots,w_N^*)$ be the (unique) Nash equilibrium of this game and let $x^* =w^*\frac{C}{\sum_{j\in\Nscr}w_j^*}$. 
The worst case efficiency of the proportional allocation mechanism is the ratio 
$$\rho = \inf_{\Phi \in \Fscr} \frac{\Theta(x^*)}{\max_{z\in \Cbb}\Theta(z)}.$$
It was shown in~\cite{johari04thesis} that this ratio is $3/4$.

Observe that the use of a mechanism effectively alters the game. The strategies of the players are now their bids, 
the feasibility of the allocation, which in a competitive equilibrium is determined by the seller is now in the hands 
of the administrator who aggregates these bids. Furthermore, the efficiency 
claimed holds only for the particular mechanism used. 

In comparison, we directly consider the generalized Nash game over the space of allocations. As shown 
in Section~\ref{sec:competitive}, the VE for a game with these objective functions has (worst case) efficiency one.  
\end{examplec}

\section{Worst Case Efficiency}\label{sec:general}
In this section we derive the worst case efficiencies of the solution concepts of the GNE and the VE over the class of utility functions $\Fscr$. 

In the case of the GNE as well as the VE, we show that the worst case efficiency is achieved for the class of linear objective functions $\Lscr$. The worst case efficiency is then calculated 
constructively. This part of the analysis that uses estimation with linear objective functions 
follows lines similar to those in~\cite{johari04efficiency}. In particular, we require the following lemma, 
which is similar to the one used in~\cite{johari04efficiency}.
\begin{lemma} \label{lem:linear}
Suppose Assumption~\ref{5-assump} holds. Let $\Phi \in \Fscr$ and let $x^*$ a solution of \SYS. For every $x\in \Cbb$, we have 
\begin{equation}
\frac{\Theta(x)}{\Theta(x^*)} \geq \frac{\nabla \Theta(x)^Tx}{\nabla \Theta(x)^Tx^\ell},   \label{eq:linear}
\end{equation}
where $x^\ell$ solves the problem \SYS\ linearized at $x$, {\rm (SYS$^\ell(x)$)}:
  	$$
	\maxproblem{{\rm SYS$^\ell(x)$}}
	{z}
	{\nabla \Theta(x)^Tz}
				 {\begin{array}{r@{\ }c@{\ }l}
	z &\in & \Cbb.
\end{array}}
	$$
\end{lemma}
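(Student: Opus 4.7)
The plan is to exploit three facts in succession: the concavity of $\Theta$, Assumption~\ref{5-assump} (specifically $\Theta(0)\geq 0$ and $\nabla\Theta\geq 0$), and the linear programming optimality of $x^\ell$ over $\Cbb$. Since $\Theta(x^*)$ and $\nabla\Theta(x)^\top x^\ell$ are both strictly positive (the former because the strictly-increasing $\varphi_i$'s give $\Theta(x^*)>\Theta(0)\geq 0$, the latter because the strict monotonicity forces $\nabla\Theta(x)>0$ componentwise and one may take, e.g., $x^\ell=Ce_k$ for the index $k$ achieving $\max_i \nabla_i\Theta(x)$), clearing denominators reduces \eqref{eq:linear} to showing
\[
\Theta(x)\,\nabla\Theta(x)^\top x^\ell \;\geq\; \Theta(x^*)\,\nabla\Theta(x)^\top x.
\]

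First I would bound $\Theta(x^*)$ from above. By concavity of $\Theta$ applied at the base point $x$,
\[
\Theta(x^*) \leq \Theta(x) + \nabla\Theta(x)^\top (x^* - x).
\]
Since $x^*\in\Cbb$ and $x^\ell$ maximises the linear functional $\nabla\Theta(x)^\top z$ over $\Cbb$, we have $\nabla\Theta(x)^\top x^* \leq \nabla\Theta(x)^\top x^\ell$. Substituting gives
\[
\Theta(x^*)\;\leq\; \Theta(x) + \nabla\Theta(x)^\top x^\ell - \nabla\Theta(x)^\top x.
\]

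Next I would collect the nonnegativity ingredient $\Theta(x)\geq \nabla\Theta(x)^\top x$. This follows from concavity applied at $x$ with reference to the origin: $\Theta(0)\leq \Theta(x) - \nabla\Theta(x)^\top x$, and $\Theta(0)=\sum_i \varphi_i(0)\geq 0$ by Assumption~\ref{5-assump}. Also, clearly $\nabla\Theta(x)^\top x^\ell \geq \nabla\Theta(x)^\top x$ since $x\in\Cbb$ and $x^\ell$ is optimal for {\rm (SYS$^\ell(x)$)}.

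Finally, plugging the upper bound on $\Theta(x^*)$ into the difference I want to control, the key step is the algebraic factorisation
\[
\Theta(x)\,\nabla\Theta(x)^\top x^\ell - \Theta(x^*)\,\nabla\Theta(x)^\top x \;\geq\; \bigl[\Theta(x)-\nabla\Theta(x)^\top x\bigr]\bigl[\nabla\Theta(x)^\top x^\ell - \nabla\Theta(x)^\top x\bigr].
\]
Both bracketed factors are nonnegative by the two observations above, so the product is nonnegative and \eqref{eq:linear} follows. The only subtlety — and the step I would flag as the ``aha'' moment rather than an obstacle — is recognising this clean factorisation; the positivity hypotheses on $\Theta(0)$ and $\nabla\Theta$ in Assumption~\ref{5-assump} are exactly what is needed to make both factors of the right sign.
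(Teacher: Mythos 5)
Your proof is correct and is essentially the paper's argument in cross-multiplied form: both use concavity of $\Theta$ at $x^*$, concavity at the origin (giving $\Theta(x)-\nabla\Theta(x)^\top x\geq\Theta(0)\geq 0$), and the optimality of $x^\ell$ for the linearized problem, and your factorisation $\Theta(x)\,\nabla\Theta(x)^\top x^\ell-\Theta(x^*)\,\nabla\Theta(x)^\top x\geq\bigl[\Theta(x)-\nabla\Theta(x)^\top x\bigr]\bigl[\nabla\Theta(x)^\top x^\ell-\nabla\Theta(x)^\top x\bigr]\geq 0$ is exactly the cross-multiplied version of the paper's step of dropping the common nonnegative term $\Theta(x)-\nabla\Theta(x)^\top x$ from the numerator and denominator of the ratio. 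One small slip in your positivity remarks: strict monotonicity of $\varphi_i$ in $x_i$ yields $\nabla_i\varphi_i>0$ but not $\nabla\Theta(x)>0$ componentwise, since the cross-derivatives $\nabla_i\varphi_j$ for $j\neq i$ may be negative and Assumption~\ref{5-assump} only guarantees $\nabla\Theta\geq 0$; however, the paper makes the same implicit nondegeneracy assumption on the denominators, so this does not change the verdict.
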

The proof is in the Appendix.
\subsection{Worst case efficiency of the VE} \label{sec:wcve}
% The previous section showed that the GNE can be arbitrarily inefficient. A more interesting question is the worst case
% efficiency of the \VE. 
Suppose we allow $\Phi$ to be any function in $\Fscr$, \ie, $\varphi_1,\hdots,\varphi_N$ are  utility functions 
satisfying Assumption~\ref{5-assump}. In this section we ask the following question: 
what is the worst possible efficiency  that the GNE and the VE can yield when $\Phi$ varies over the class $\Fscr$? In other words, what are 
the ratios $\underline{\vartheta}$ and $\underline{\rho}$? We prove that the worst case efficiency of the 
VE over is in fact zero, \ie $\underline{\vartheta} =0$. It follows  that $\underline{\rho}=0$. 

Our argument proceeds by characterizing $\underline{\vartheta}$. Theorem~\ref{thm:lscrprime} shows that the worst case efficiency of 
the VE over $\Fscr$ in fact the same as the worst case efficiency over $\Lscr$.  To prove this, let us set 
up some notation. For arbitrary $\bar{x} \in \Cbb$ and $\Phi \in \Fscr$, define the linearized 
functions $\widetilde{\varphi}^{\bar{x}}_i, i \in \Nscr$ and $\widetilde{\Theta}^{\bar{x}}$ as 
$$ \widetilde{\varphi}_i^{\bar{x}}(x) = \nabla \varphi_i(\bar{x})^Tx, \ i \in \Nscr \quad \aur \quad
\widetilde{\Theta}^{\bar{x}}(x)= \nabla \Theta(\bar{x})^Tx = \sum_{j\in \Nscr} \widetilde{\varphi}_j^{\bar{x}}(x).$$ 
% The problem \SYS$^\ell(\bar{x})$ is the same as the problem \SYS\ with $\Theta$ replaced by $\widetilde{\Theta}$. 
% The above Lemma shows that for all functions $\Theta$ and for all $x\in \Cbb$ there exists a linear function 
% $\widetilde{\Theta}$ such that the value $ \frac{\Theta(x)}{\max_{z\in \Cbb} \Theta(z)} $ dominates the value 
% $\frac{ \widetilde{\Theta}(x)}{\max_{z\in \Cbb} \widetilde{\Theta}(z)}$. 
Clearly the utility functions $\widetilde{\Phi}^{\bar{x}} :=(\widetilde{\varphi}_1^{\bar{x}},\hdots,\widetilde{\varphi}_N^{\bar{x}})$ lie 
in the class $\Lscr$. Let $\widetilde{F}^{\bar{x}}$ denote the mapping 
$$\widetilde{F}^{\bar{x}}(x) = -\pmat{\nabla_1 \widetilde{\varphi}_1^{\bar{x}}(x) \\ \vdots 
\\ \nabla_N \widetilde{\varphi}_N^{\bar{x}}(x) }. $$
Observe that 
$$ \widetilde{F}^{\bar{x}}(x)= -\pmat{\nabla_1 \varphi_1(\bar{x}) \\ \vdots \\ \nabla_N \varphi_N(\bar{x})} = F(\bar{x}) \quad \forall x. $$ 
  This also shows that every VE of a game with $\Phi \in \Fscr$ 
is a VE of some game with linear objectives.
\begin{lemma} \label{lem:subset}
Let $\Phi \in \Fscr$. Then 
$$ \VE(\Phi) \subseteq \bigcup_{x \in \VE(\Phi)} \VE(\widetilde{\Phi}^{x}).$$
\end{lemma}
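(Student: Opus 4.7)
The plan is a short, direct calculation using the variational inequality characterization of the VE stated just before the lemma. Recall that $x \in \VE(\Phi)$ if and only if $x$ solves $\VI(\Cbb,F)$, that is, $x \in \Cbb$ and
\[
F(x)^T(y-x) \geq 0 \qquad \forall\, y \in \Cbb.
\]
The strategy is to take an arbitrary $x \in \VE(\Phi)$ and verify that $x$ itself lies in $\VE(\widetilde{\Phi}^{x})$; this is enough because it places $x$ in the set indexed by $x$ on the right-hand side of the claimed inclusion.

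First I would unfold the definition of $\VE(\widetilde{\Phi}^{x})$. By the VI characterization, $z \in \VE(\widetilde{\Phi}^{x})$ iff $z \in \Cbb$ and $\widetilde{F}^{x}(z)^T(y-z) \geq 0$ for every $y \in \Cbb$. The key observation, already recorded in the excerpt, is that the linearized operator is a constant map:
\[
\widetilde{F}^{x}(z) \;=\; -\bigl(\nabla_1\varphi_1(x),\,\ldots,\,\nabla_N\varphi_N(x)\bigr)^T \;=\; F(x) \qquad \forall\, z.
\]
Substituting $z = x$, the condition for $x$ to be a VE of $\widetilde{\Phi}^{x}$ reduces to
\[
F(x)^T(y - x) \geq 0 \qquad \forall\, y \in \Cbb,
\]
together with $x \in \Cbb$. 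Both of these hold exactly because $x \in \VE(\Phi)$.

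Hence $x \in \VE(\widetilde{\Phi}^{x})$ for every $x \in \VE(\Phi)$, which immediately gives
\[
\VE(\Phi) \;\subseteq\; \bigcup_{x \in \VE(\Phi)} \VE(\widetilde{\Phi}^{x}).
\]
There is no real obstacle here; the only subtlety to keep track of is that the linearization is taken at the \emph{same} point $x$ that is being tested as an equilibrium, so that the constant-vector identity $\widetilde{F}^{x}(x) = F(x)$ lets one reuse the VI condition for $\Phi$ verbatim. Feasibility ($x \in \Cbb$) is inherited from membership in $\VE(\Phi)$, completing the argument.
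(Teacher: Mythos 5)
Your argument is correct and is essentially identical to the paper's own proof: both take an arbitrary VE $\bar{x}$ of $\Phi$, use the constant-map identity $\widetilde{F}^{\bar{x}}(\cdot) = F(\bar{x})$ to transfer the variational inequality $F(\bar{x})^T(y-\bar{x}) \geq 0$ for all $y \in \Cbb$ verbatim to the linearized game, and conclude $\bar{x} \in \VE(\widetilde{\Phi}^{\bar{x}})$. No gaps; your write-up is if anything slightly more explicit about feasibility being inherited.
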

\begin{proof}
Since $\bar{x}$ is a VE for objective functions $\Phi$, 
$$ F(\bar{x})^T(y-\bar{x}) \geq 0 \quad \forall \ y \in \Cbb \implies 
\widetilde{F}^{\bar{x}}(\bar{x})^T(y-\bar{x}) \geq 0 \quad \forall \ y \in \Cbb.$$ 
This in turn implies $\bar{x} \in  \VE(\widetilde{\Phi}^{\bar{x}})$.
\end{proof}

% Therefore for all $\Phi \in \Fscr$ and all $\bar{x} \in \VE(\Phi)$ there exists $\widetilde{\Phi} := \widetilde{\Phi}^{\bar{x}} \in \Lscr$ 
% such that $\bar{x}\in \VE(\widetilde{\Phi})$. So, $\bigcup_{\Phi \in \Fscr}\VE(\Phi) \subseteq \bigcup_{\Phi \in \Lscr} \VE(\Phi)$. 
% But since $\Lscr \subseteq \Fscr$, 
% $$ \bigcup_{\Phi \in \Fscr}\VE(\Phi) = \bigcup_{\Phi \in \Lscr} \VE(\Phi).$$ 
Following the central result of this section. We show that for the worst case efficiency of the VE over the class $\Fscr$, $\underline{\vartheta}$, it suffices to look at the class of linear objective functions, $\Lscr$. The proof is in the appendix.
\begin{theorem} \label{thm:lscrprime}
Suppose Assumption~\ref{5-assump} holds. Then 
$$\underline{\vartheta} = \inf_{\bar{x} \in \VE(\Phi),\: \Phi \in \Fscr}  \frac{\Theta(\bar{x})}{\max_{z \in \Cbb}\Theta(z)}
= \inf_{\bar{x} \in \VE(\Phi),\: \Phi \in \Lscr}  \frac{\Theta(\bar{x})}{\max_{z \in \Cbb}\Theta(z)}.$$
\end{theorem}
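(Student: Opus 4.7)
The plan is to prove both inequalities $\underline{\vartheta}_{\Fscr} \leq \underline{\vartheta}_{\Lscr}$ and $\underline{\vartheta}_{\Fscr} \geq \underline{\vartheta}_{\Lscr}$, where I use the subscript notation to indicate the class over which the infimum is taken. The first inequality is immediate: since $\Lscr \subseteq \Fscr$, the infimum over the larger set $\Fscr$ cannot exceed the infimum over $\Lscr$. All the work is in the reverse direction.

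For the reverse direction, fix an arbitrary $\Phi \in \Fscr$ and $\bar{x} \in \VE(\Phi)$. Let $x^*$ solve \SYS. The first key step is to apply Lemma~\ref{lem:linear} at the point $\bar{x}$, which yields
\begin{equation*}
\frac{\Theta(\bar{x})}{\Theta(x^*)} \;\geq\; \frac{\nabla \Theta(\bar{x})^T \bar{x}}{\nabla \Theta(\bar{x})^T x^\ell},
\end{equation*}
where $x^\ell$ solves $\SYS^\ell(\bar{x})$. The crucial observation is that the right-hand side is nothing but the efficiency ratio for the linearized utility tuple $\widetilde{\Phi}^{\bar{x}}$: indeed, $\nabla \Theta(\bar{x})^T z = \widetilde{\Theta}^{\bar{x}}(z)$ for every $z$, and $x^\ell$ is by construction a maximizer of $\widetilde{\Theta}^{\bar{x}}$ over $\Cbb$.

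The second key step is to invoke Lemma~\ref{lem:subset}, which guarantees that $\bar{x} \in \VE(\widetilde{\Phi}^{\bar{x}})$. I will then need to verify that $\widetilde{\Phi}^{\bar{x}} \in \Lscr$, i.e.\ that the linearized functions satisfy Assumption~\ref{5-assump}. Each $\widetilde{\varphi}_i^{\bar{x}}$ is linear (hence concave and $C^1$) and nondecreasing in $x_i$ because $\nabla_i \varphi_i(\bar{x}) \geq 0$ (as $\varphi_i$ is strictly increasing in $x_i$); $\widetilde{\varphi}_i^{\bar{x}}(0) = 0 \geq 0$; $\widetilde{\Theta}^{\bar{x}}$ is linear hence concave; and $\nabla \widetilde{\Theta}^{\bar{x}}(z) = \nabla \Theta(\bar{x}) \geq 0$ by Assumption~\ref{5-assump}. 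Hence $\widetilde{\Phi}^{\bar{x}} \in \Lscr$, and so the ratio on the right-hand side above is bounded below by $\underline{\vartheta}_{\Lscr}$.

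Combining these observations gives, for every $\Phi \in \Fscr$ and every $\bar{x} \in \VE(\Phi)$,
\begin{equation*}
\frac{\Theta(\bar{x})}{\max_{z\in\Cbb} \Theta(z)} \;\geq\; \frac{\widetilde{\Theta}^{\bar{x}}(\bar{x})}{\max_{z\in\Cbb}\widetilde{\Theta}^{\bar{x}}(z)} \;\geq\; \underline{\vartheta}_{\Lscr},
\end{equation*}
and taking the infimum over $\Phi \in \Fscr$ and $\bar{x} \in \VE(\Phi)$ yields $\underline{\vartheta}_{\Fscr} \geq \underline{\vartheta}_{\Lscr}$. The main subtle point is the verification that $\widetilde{\Phi}^{\bar{x}}$ genuinely lies in $\Lscr$ and that the denominator $\nabla \Theta(\bar{x})^T x^\ell$ in Lemma~\ref{lem:linear} is strictly positive so that the ratio is well-defined; this follows since $\nabla \Theta(\bar{x}) \geq 0$ component-wise with at least one strictly positive component (because each $\varphi_i$ is strictly increasing in $x_i$), so choosing $z$ supported on such a coordinate gives $\widetilde{\Theta}^{\bar{x}}(z) > 0$.
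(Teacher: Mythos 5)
Your proposal is correct and follows essentially the same route as the paper's own proof: Lemma~\ref{lem:linear} to pass from $\Theta$ to the linearized ratio, Lemma~\ref{lem:subset} to place $\bar{x}$ in $\VE(\widetilde{\Phi}^{\bar{x}})$ with $\widetilde{\Phi}^{\bar{x}}\in\Lscr$, and the trivial reverse inequality from $\Lscr\subseteq\Fscr$. Your explicit verification that $\widetilde{\Phi}^{\bar{x}}$ satisfies Assumption~\ref{5-assump} (and that the denominator is positive) is a detail the paper leaves implicit; note only that membership in $\Lscr$ requires $\widetilde{\varphi}_i^{\bar{x}}$ to be \emph{strictly} increasing in $x_i$, which does hold since concavity plus strict monotonicity of $\varphi_i$ forces $\nabla_i\varphi_i(\bar{x})>0$, not merely $\geq 0$.
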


The above result does not provide us the value of $\underline{\vartheta}$ but only a characterization of it. 
We now show that $\underline{\vartheta}=0$ through an example.

\begin{examplec} \label{ex:wcve}
Consider a game with objective functions $\varphi_i(x) = d_i^Tx$, where $d_i = (d^1_i,\hdots,d^N_i)$ $\in \Real^N$ such that
$\nabla \Theta = \sum d_i \geq 0$. Furthermore, assume that $d_i^i > 0$ for all $i \in \Nscr$. It is easy to see this
collection of utility functions  
$\Phi$ satisfies Assumption~\ref{5-assump}. 

Let $c_i = d^i_i$ and $c := (c_1 , \hdots ,c_N)^T= F(x)$, for each $x$. 
The set of VEs of this game, $\VE(c)$ is the set of $x$ for which there exists $\lambda$ such that 
\begin{align*}
 0  \leq x & \perp -c + \lambda \bfone\geq 0 \\
 0 \leq \lambda & \perp C - \bfone^T x\geq 0
\end{align*}
Since $c>0$, observe that any $\lambda$ satisfying these equilibrium conditions must be strictly positive 
and that $\bfone^T x =C$ must hold for any VE $x$. 

We construct values for $d_1,\hdots,d_N$ so that there exists a VE whose worst case efficiency is arbitrarily close to zero. 
Let $\varepsilon \in (0,1)$ and let 
$$d_i^j := \begin{cases}%\frac{2}{N-1} & j =1\\
0, & \forall \ j \neq i, \\
       \varepsilon,    & j =i.
        \end{cases} \ \forall i \in \Nscr \backslash \{N\}, 
%        \quad d_1^j :=\begin{cases}
%                                                                               1 & \forall \ j \neq 1, \\
%                                                                               \varepsilon & j=1, 
%                                                                              \end{cases} 
                                                                              \quad d_N^j := \begin{cases}1 & j=1,\\
                                                                                                          0 & \forall \ j \notin \{N,1\},\\
                                                                                                          2 \varepsilon & j = N.
                                                                                                         \end{cases}
 $$
It follows that $c = (\varepsilon,\hdots,\varepsilon,2\varepsilon)$ and $d = \sum_{i\in \Nscr} d_i = (\varepsilon+1,\varepsilon,\hdots,\varepsilon,2\varepsilon)$. 
% Consequently, if $x$ is a VE of this game, $c^Tx = \varepsilon \bfone^Tx = C\varepsilon$ 
% and for any $z$, $d^Tz = (N-1 + \varepsilon) \bfone^Tz$. 
It is easy to check that $x^* := \left(0, \hdots,0, C\right)$ is a VE (the equilibrium conditions
are satisfied for $\lambda = 2\varepsilon$). Since $\varepsilon +1 > 2\epsilon$, the optimal value of \SYS\  is 
$C(\varepsilon+1)$. 
The worst case efficiency of the VE is bounded above by the efficiency of $x^*$. Therefore, 
$$ \underline{\vartheta} \leq \frac{c^Tx^*}{\max_{z \in \Cbb} d^Tz} = \frac{2C\varepsilon}{C(\varepsilon+1)} =\frac{2\varepsilon}{\varepsilon + 1}.$$
Letting $\varepsilon$ decrease to zero reveals that  the worst case efficiency  $\underline{\vartheta} =0$,  
implying that the VE can be arbitrarily inefficient. Letting $\varepsilon$ approach one shows that 
efficiencies arbitrarily close to unity are also achievable in the class $\Fscr$. 
\end{examplec}\\
Summarizing the above example we have the following result for the worst case efficiency of the VE.
\begin{theorem}\label{thm:wcve}
 The worst case efficiency of the VE (and the GNE) 
 over the class $\Fscr$ is zero. This bound is tight in the sense that for any $\varepsilon \in (0,1]$, 
 there exists a game with objective functions $\Phi$ in $\Fscr$ such that a VE (and hence a GNE) of this game has efficiency $\varepsilon$.
\end{theorem}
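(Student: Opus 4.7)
The proof is essentially a direct consequence of Example~\ref{ex:wcve}, which already supplies a one-parameter family of games whose VE efficiencies sweep out the interval $(0,1]$. My plan is to organize that example into a clean proof of both the infimum claim and the tightness claim.

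First, for the infimum claim $\underline{\vartheta}=0$, I will fix an arbitrary $\varepsilon\in(0,1)$ and consider the linear utilities $\varphi_i(x)=d_i^\top x$ defined in Example~\ref{ex:wcve}. The verification that $\Phi\in\Fscr\subseteq\Lscr$ is immediate: each $\varphi_i$ is linear (hence concave and continuously differentiable), strictly increasing in $x_i$ (since $d_i^i=\varepsilon>0$ for $i<N$ and $d_N^N=2\varepsilon>0$), satisfies $\varphi_i(0)=0$, while $\Theta$ is linear with nonnegative gradient $\nabla\Theta=(\varepsilon+1,\varepsilon,\ldots,\varepsilon,2\varepsilon)\geq 0$. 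Next I will verify that $x^*=(0,\ldots,0,C)$ is a VE by exhibiting the multiplier $\lambda=2\varepsilon>0$ and checking the complementarity conditions: at $x^*$, $-\nabla_i\varphi_i(x^*)+\lambda=-\varepsilon+2\varepsilon=\varepsilon\geq0$ with $x_i^*=0$ for $i<N$, and $-\nabla_N\varphi_N(x^*)+\lambda=-2\varepsilon+2\varepsilon=0$ with $x_N^*=C>0$, while $\lambda>0$ and $\bfone^\top x^*=C$.

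Next I will compute the efficiency. The optimal value of \SYS\ is $\max_{z\in\Cbb}\nabla\Theta^\top z = (\varepsilon+1)C$, attained at the allocation $(C,0,\ldots,0)$, since $\varepsilon+1$ is the largest component of $\nabla\Theta$ when $\varepsilon<1$. On the other hand, $\Theta(x^*)=2\varepsilon C$. Hence the efficiency of this VE is $2\varepsilon/(\varepsilon+1)$. Letting $\varepsilon\downarrow 0$ yields $\underline{\vartheta}\leq\lim_{\varepsilon\downarrow 0}\frac{2\varepsilon}{\varepsilon+1}=0$, and since $\underline{\vartheta}\geq 0$ trivially, we conclude $\underline{\vartheta}=0$. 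The inequality $\underline{\rho}\leq\underline{\vartheta}$ (established in Section~\ref{sec:preliminaries} from the fact that every VE is a GNE) then gives $\underline{\rho}=0$ as well.

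For the tightness claim, given any target efficiency $\eta\in(0,1]$, I will choose $\varepsilon=\eta/(2-\eta)\in(0,1]$ and observe that the map $\varepsilon\mapsto 2\varepsilon/(\varepsilon+1)$ is a bijection from $(0,1]$ onto $(0,1]$ with inverse exactly this formula. For $\eta<1$ the above construction applies verbatim and delivers a VE with efficiency $\eta$; for $\eta=1$, taking $\varepsilon=1$ we have $\nabla\Theta=(2,1,\ldots,1,2)$, so both $(C,0,\ldots,0)$ and $(0,\ldots,0,C)$ achieve the optimum and $x^*$ remains a VE, giving efficiency $1$. The main (and only) subtlety is confirming the optimum of \SYS\ is at the first coordinate for every $\varepsilon\in(0,1]$, which reduces to the trivial inequality $\varepsilon+1\geq 2\varepsilon$; all remaining steps are routine verifications of the KKT conditions.
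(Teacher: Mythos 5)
Your proposal is correct and follows essentially the same route as the paper: both rest on Example~\ref{ex:wcve}, verifying that $x^*=(0,\hdots,0,C)$ is a VE with multiplier $\lambda=2\varepsilon$ and efficiency $2\varepsilon/(\varepsilon+1)$, then sweeping $\varepsilon$ over $(0,1)$. The only (harmless) difference is at efficiency exactly one: the paper invokes the perfect-competition argument of Section~\ref{sec:competitive}, whereas you extend the same example to $\varepsilon=1$ via the explicit inverse $\varepsilon=\eta/(2-\eta)$, which works equally well (and note the inclusion should read $\Lscr\subseteq\Fscr$, not the reverse).
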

\begin{proof}
We only need to prove the tightness. That efficiency of $\varepsilon$ for any $\varepsilon \in (0,1)$ 
are achieved is demonstrated in the example above. Section~\ref{sec:competitive} showed that in the setting of 
perfect competition, the VE coincides with the competitive equilibrium and is therefore efficient.  Thus efficiency $\varepsilon =1$
is also achievable. 
\end{proof}

\section{Best Case Efficiency} \label{sec:best}
In this section we present the {\em best} case analysis of the efficiency of the GNE and the VE. Although  
Section~\ref{sec:competitive} demonstrates that the best case efficiency of the VE and the GNE is 
unity, we go a step further in this section and characterize the class of games for which every VE is efficient. 
\ie, in this case, the best {\em and} the worst case efficiency of the VE is unity. 
This class is large and it includes in it the perfectly competitive setting. 
Then, restricting ourselves to this class we determine  the worst case efficiency of the GNE over this class. 
We find that while the VE is efficient for every game in this class, the GNE can be arbitrarily inefficient.

For the VE to be efficient, it has to be optimal for the problem \SYS. To find a sufficient condition for 
the VE to solve \SYS, let us first answer a related question: is there an optimization problem that the VE solves?
By this we seek another optimization problem whose optimality conditions are the same as the equilibrium 
conditions of the \VE, whereby solving the two is equivalent. 
Fortunately the answer to this is rather simple. The VE is the solution of \VI($\Cbb,F$). If there exists a
 concave differentiable function $f : \Real^N \rightarrow \Real $ such that 
\begin{equation}
-\nabla f(x) =  F(x) \quad \forall \ x \in \Real^N, \label{eq:fdef}
\end{equation}
then \VI($\Cbb,F$) is equivalent to the \VI($\Cbb,-\nabla f$) which, by the convexity of $\Cbb$ is equivalent to the optimization problem
 	$$
	\maxproblem{\VE-OPT}
	{x}
	{f(x)}
				 {\begin{array}{r@{\ }c@{\ }l}
		\sum_{j\in \Nscr} x_j &\leq& C, \\
		x& \geq& 0.
	\end{array}}
	$$
One way of interpreting \eqref{eq:fdef} is that it asks for $F$ to be integrable and for $-f$ to be its integrand. 
It is well known that such a function $f$ exists if and only if the Jacobian $\nabla F(x)$ is symmetric~\cite{ortega87iterative}. 

Now, suppose such a function does exist. Then since $-\nabla f(x) = F(x)$ holds for all $x$, we must have for each $i$ \begin{align}
\nabla_i \left[ f(x) - \varphi_i(x)\right] &=0 \non \\
\mbox{\ie,} \quad f(x) - \varphi_i(x) &= \eta_i(x^{-i}), \label{eq:1}
\end{align}
for some $\eta_i$ which is a function of $x\mi$ and for all $i\in \Nscr$. Using this, one may now ask the original question. For what objective functions 
$\Phi$ is $f(x) = \Theta(x) = \sum_{j \in \Nscr} \varphi_j(x)$? The following theorem 
provides us such a characterization. 

\begin{theorem}\label{thm:veeff}
Suppose Assumption~\ref{5-assump} holds. The identity $-\nabla \Theta =F$ holds if and only if there exist continuously
differentiable functions $\eta_i$ of $x\mi$ for each $i \in \Nscr,$ such that the utility functions are given by 
\begin{equation}
\varphi_i(x)= \frac{\sum_{j=1}^N \eta_j(x^{-j})}{N-1} - \eta_i(x\mi), \label{eq:slade}
\end{equation}
for every $i \in \Nscr$.
\end{theorem}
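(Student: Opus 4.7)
My plan is to exploit the identity derived just before the theorem statement, namely equation \eqref{eq:1}, specialized to $f = \Theta$, and then to do a simple linear-algebraic manipulation on the resulting system of $N$ equations to solve explicitly for each $\varphi_i$ in terms of the $\eta_j$'s.

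For the forward direction, I would argue as follows. Suppose $-\nabla \Theta = F$. Since the $i$-th component of $-F$ is $\nabla_i \varphi_i$ and of $\nabla \Theta$ is $\sum_{j} \nabla_i \varphi_j$, this is equivalent to $\nabla_i [\Theta(x) - \varphi_i(x)] = 0$ for every $i$. Hence $\Theta - \varphi_i$ is independent of $x_i$, so there exist continuously differentiable functions $\eta_i$ such that
\begin{equation}
\varphi_i(x) = \Theta(x) - \eta_i(x^{-i}), \qquad \forall \ i \in \Nscr. \label{eq:mine1}
\end{equation}
Summing \eqref{eq:mine1} over $i \in \Nscr$ and using $\Theta = \sum_j \varphi_j$ gives
\[ \Theta(x) = N\,\Theta(x) - \sum_{j=1}^N \eta_j(x^{-j}), \]
so $\Theta(x) = \frac{1}{N-1}\sum_{j=1}^N \eta_j(x^{-j})$, which when substituted back into \eqref{eq:mine1} yields exactly the expression \eqref{eq:slade}.

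For the converse direction, I would simply compute directly. Given the form \eqref{eq:slade}, summing over $i$ collapses the telescoping terms to give $\Theta(x) = \frac{1}{N-1}\sum_j \eta_j(x^{-j})$. Since $\eta_i(x^{-i})$ does not depend on $x_i$, one has
\[ \nabla_i \Theta(x) = \frac{1}{N-1} \sum_{j \neq i} \nabla_i \eta_j(x^{-j}) = \nabla_i \varphi_i(x), \]
where the last equality uses that the term $\eta_i(x^{-i})$ in \eqref{eq:slade} has zero $i$-th partial. Hence $-\nabla_i\Theta = -\nabla_i\varphi_i$ is the $i$-th component of $F$, giving $-\nabla \Theta = F$.

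I don't anticipate a serious obstacle: all the work is already done in the discussion preceding the theorem, which established that $-\nabla f = F$ for concave $f$ implies $f - \varphi_i$ depends only on $x^{-i}$. The only mild subtlety is checking the integrability/regularity assumptions implicit in writing $\eta_i$ as a function of $x^{-i}$, but these follow from the $C^1$ hypothesis of Assumption~\ref{5-assump} together with the fact that $\nabla_i[\Theta - \varphi_i] \equiv 0$. The rest is a one-line linear algebraic inversion, taking advantage of the fact that the $N \times N$ system relating the $\varphi_i$ to the $\eta_i$ via \eqref{eq:mine1} has a simple closed-form solution when summed.
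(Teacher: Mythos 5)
Your proposal is correct and follows essentially the same route as the paper: both directions reduce to the observation that $-\nabla\Theta=F$ is equivalent to $\nabla_i[\Theta-\varphi_i]\equiv 0$, hence $\varphi_i = \Theta - \eta_i(x^{-i})$, followed by summing over $i$ to solve for $\Theta = \frac{1}{N-1}\sum_j\eta_j(x^{-j})$ and substituting back. The only cosmetic difference is that the paper invokes its previously displayed identity \eqref{eq:1} with $f=\Theta$, whereas you rederive it inline; the content is identical.
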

\begin{proof}
``$\implies$'' Let  $-\nabla \Theta = F$. In particular, we have, for each $i\in \Nscr$, and for all $x$, 
$ \nabla_i \Theta(x) = \nabla_i\varphi_i(x).$ Recall from  Assumption~\ref{5-assump} that $\Theta$ is concave. Therefore
\eqref{eq:1} gives that for each $i \in \Nscr$,
\begin{align}
\Theta(x) - \varphi_i(x) &= \eta_i(x^{-i}), \label{5-eq:2}
\end{align}
for some $\eta_i$ which is a function of $x^{-i}$. Since $\varphi_i$ and $\Theta$ are continuously differentiable, it follows
that 
$\eta_i$ is continuously differentiable. Summing \eqref{5-eq:2} over $i$ and using that $\Theta = \sum \varphi_i$, 
gives 
\begin{equation}
  \Theta(x) = \frac{\sum_{i=1}^N \eta_i(x^{-i})}{N-1}. \label{eq:thetadef}
\end{equation}
Now substituting for $\Theta$ from \eqref{eq:thetadef} in \eqref{5-eq:2} gives the result desired in \eqref{eq:slade}. 

``$\impliedby$'' Suppose \eqref{eq:slade} holds for some continuously differentiable functions $\eta_1,\hdots,\eta_N$. Summing \eqref{eq:slade} over $i$ gives \eqref{eq:thetadef}. 
Then for each $i \in \Nscr,$
$$\nabla_i \varphi_i(x) = \nabla_i \left( \frac{\sum_{j=1}^N \eta_j(x^{-j})}{N-1}\right) = \nabla_i\Theta(x). $$
In other words, $-\nabla \Theta = F$.
\end{proof}

Theorem~\ref{thm:veeff} shows that for a class of objective functions, the worst and best 
case efficiency of the VE is unity.
\begin{theorem}
Suppose Assumption~\ref{5-assump} holds. Then the best case efficiencies of the VE and GNE are unity. \ie $\bar{\rho} = \bar{\vartheta} =1$.
\end{theorem}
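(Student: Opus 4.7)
The plan is to exhibit a single instance $\Phi \in \Fscr$ whose VE (and hence GNE) achieves efficiency exactly one; combined with the trivial upper bound that efficiency never exceeds one, this pins both suprema to unity.

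First I would note that the definition of efficiency immediately gives $\bar\rho,\bar\vartheta \le 1$: for any $\Phi \in \Fscr$ and any $x \in \Cbb$, the numerator $\Theta(x)$ cannot exceed the maximum $\max_{z \in \Cbb} \Theta(z)$ appearing in the denominator (which is positive and finite under Assumption~\ref{5-assump}). Moreover, $\bar\rho \ge \bar\vartheta$ because $\VE(\Phi)\subseteq \GNE(\Phi)$, so it suffices to prove $\bar\vartheta \ge 1$.

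To produce a game whose VE is efficient, I would appeal to Theorem~\ref{thm:veeff}: choose any utility functions $\varphi_i(x) = U_i(x_i)$ where each $U_i$ is concave, continuously differentiable, strictly increasing and nonnegative at $0$ (equivalently, take $\eta_i \equiv 0$ in the representation \eqref{eq:slade}). These $\varphi_i$ lie in $\Fscr$, and by Theorem~\ref{thm:veeff} they satisfy $-\nabla \Theta = F$. Consequently, the variational inequality \VI$(\Cbb,F)$ characterizing the VE coincides with \VI$(\Cbb, -\nabla\Theta)$, which, by concavity of $\Theta$ over the convex set $\Cbb$, is exactly the optimality condition for \SYS. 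Thus every VE of this game solves \SYS\ and has efficiency one; equivalently, as noted in Section~\ref{sec:competitive}, the VE coincides with the competitive equilibrium in the perfectly competitive setting and is efficient by the first welfare theorem.

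This yields $\bar\vartheta \ge 1$, and combined with $\bar\vartheta \le \bar\rho \le 1$ we conclude $\bar\rho = \bar\vartheta = 1$. There is no real obstacle here; the only substantive ingredient is the already-established Theorem~\ref{thm:veeff} (or, equivalently, the competitive-equilibrium identification of Section~\ref{sec:competitive}) which supplies a witness game achieving unit efficiency.
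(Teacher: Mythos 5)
Your proof is correct and follows essentially the same route as the paper, which likewise obtains $\bar{\rho}=\bar{\vartheta}=1$ by combining the trivial upper bound with the witness class of Theorem~\ref{thm:veeff} (equivalently, the competitive-equilibrium identification of Section~\ref{sec:competitive}). One small slip: taking $\eta_i\equiv 0$ in \eqref{eq:slade} yields $\varphi_i\equiv 0$, which is not strictly increasing; for the perfectly competitive utilities $\varphi_i(x)=U_i(x_i)$ the correct representation is $\eta_i(x^{-i})=\sum_{j\neq i}U_j(x_j)$ --- though this does not affect your argument, since $-\nabla\Theta=F$ is immediate by direct computation for these utilities.
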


Remarkably, the class given by \eqref{eq:slade} does not depend on $\Cbb$. \ie if $\varphi_1,\hdots,\varphi_N$ 
are of the form given by \eqref{eq:slade}, then the \VI($V,F$) is equivalent to \VI$(V,-\nabla \Theta)$ for any closed convex
set $V$. 
Note also that from \eqref{eq:1},
$$\eta_i(x\mi) = \sum_{j\in \Nscr} \varphi_j(x) - \varphi_i(x) = \sum_{j \neq i} \varphi_j(x). $$ 
In general the right hand side may not independent of $x_i$, and the above equation may not hold. Thus, 
not every game has VEs equivalent to optimization problems. 

We now revisit the relationship of the VE with the competitive equilibrium in the light of this result. 

\begin{examplec}
In Section~\ref{sec:competitive}, we considered a perfectly competitive setting and then we considered a \scg analogous 
to it. The game considered therein was with $\varphi_i(x) = U_i(x_i)$ for all $i$. For this game, we indeed have 
the functions $\eta_i$ given by 
$$\eta_i(x\mi) = \sum_{j\neq i}U_j(x_j), $$
which is independent of $x_i$.
\end{examplec}

Notice that the statement `$-\nabla \Theta = F$' characterized by Theorem~\ref{thm:veeff} is  somewhat stronger than 
the statement `every VE is efficient'. `$-\nabla \Theta =F$' provides the equivalence  between the VIs characterizing the VE and solutions of \SYS. This equivalence implies the equality of the solution 
sets of the said \VIs, though it is not necessary to conclude the equality of these sets. Following is 
an example of a game where objectives do not satisfy 
\eqref{eq:slade}, every VE is efficient.
\begin{examplec}
Consider $\varphi_i(x) = x_i g(\bfone^Tx)$ whereby $\Theta(x) = \bfone^T x g(\bfone^T x) $ and assume that $g(\bfone^Tx) + x_ig'(\bfone^Tx) >0$ and 
$\nabla \Theta (x) = \bfone (g(\bfone^Tx) + (\bfone^Tx)g'(\bfone^Tx)) \geq 0$ for all $x$. It is easy to check that this set of
functions 
does not satisfy \eqref{eq:slade}. But, as we show below, every VE of this game is efficient. 

Let us first consider the system problem \SYS. Since $\nabla \Theta \geq 0$, the maximum in \SYS\ is attained 
at $\bfone^Tx =C$, whereby the optimal value of \SYS\ is $Cg(C)$. 
Now consider the \scg formed from these utilities. Thus $x$ is a VE of this game if and only if
\begin{align*}
0 \leq x &\perp -g(\bfone^Tx)\bfone - xg'(\bfone^Tx) + \lambda\bfone \geq 0 \\
0 \leq \lambda & \perp C - \bfone^Tx \geq 0,  
\end{align*}
for some $\lambda \in \Real$. Since $g(\bfone^Tx)\bfone + x g'(\bfone^Tx) >0$, $\lambda = 0$ does not satisfy 
these equations. Consequently, for each VE $x$, the equality $\bfone^Tx = C$ must hold. 
% Thus the above equilibrium conditions  reduce to the condition, 
% \begin{align*}
% 0 \leq x &\perp -g(C)\bfone - xg'(C) + \lambda\bfone \geq 0.
% \end{align*}
% Now suppose for $i\neq j$, $x_i =0$ but $x_j >0$. Then $\lambda = g(C) + x_jg'(C)$. Furthermore from the equilibrium conditions for 
% $x_i$, $\lambda \geq g(C)$. This implies that $x_jg'(C) \geq 0$. But $g$ is strictly decreasing, implying that $x_j \leq 0$. 
% That contradicts $x_j >0$. So we cannot have $x_i=0, x_j >0$ for $i\neq j$. Since $\bfone^Tx =C$, this implies that the 
% only alternative is $x_j >0$ for all $j$. In that case, $\lambda = \frac{1}{N}(g(C) + Cg'(C))$ and $x_1=\hdots=x_N = \frac{C}{N}$. 
% 
So for any VE $x$, $\Theta(x) = \bfone^Tx g(\bfone^Tx) = Cg(C)$, which is the optimal value of \SYS.
% 
% 
% Let $x^*$ be the solution of \SYS. $ \Theta(x) = \bfone^Tx g(\bfone^Tx)$, and so
% $$\nabla \Theta(x) = \bfone [g(\bfone^Tx) + (\bfone^Tx) g'(\bfone^Tx)]$$
% \begin{align*}
% 0 \leq x^* & \perp  - \nabla \Theta(x^*) + \lambda \bfone \geq 0 \\
% 0 \leq \lambda & \perp C - \bfone^Tx \geq 0
% \end{align*}
% 
% 
% $$\frac{\Theta(x)}{\Theta(x^*)} = C^Tg(C)$$
\end{examplec}

Also note that the requirement `$-\nabla \Theta = F$' is, strictly speaking, not the same as `$\VI(\Cbb,F) \equiv
\VI(\Cbb,-\nabla\Theta)$', 
because the solution set of a VI is invariant under multiplication of the function by a positive constant. 
Specifically, if $-c\nabla \Theta = F$, where $c>0$ is a real number, then VI$(\Cbb,F)$ is equivalent to VI$(\Cbb,F)$. 
Therefore if $F = -\nabla f$ and $f = c\Theta$, we get using \eqref{eq:1},
$$ c \Theta(x)  = \varphi_i(x) + \eta_i(x\mi).$$
Arguing as in Theorem~\ref{thm:veeff}, we get 
$$ \Theta(x) = \frac{\sum_{j \in \Nscr} \eta_j(x^{-j})}{cN-1} \quad \mbox{and} \quad \varphi_i(x) = \frac{\sum_{j\in \Nscr}
\eta_j(x^{-j})}{cN-1} - \eta_i(x\mi) \quad \forall \ x.$$

Finally, we note that \eqref{eq:1} has also appeared previously in literature in a different context. Slade in~\cite{slade94what} has derived \eqref{eq:1} as a means of giving sufficient conditions for the stationarity conditions of an unconstrained Nash equilibrium to be equivalent to an optimization problem. 

We will invoke this class of games in the following section. For reference, denote 
\begin{align} 
\Fscr' &= \{ \Phi \ | \ \Phi \in \Fscr \ \mbox{and} \ \exists \ \mbox{functions } \eta_i \ \mbox{of } x\mi, i\in \Nscr, \ \mbox{so that } \varphi_1,\hdots,\varphi_N \ \mbox{are given by \eqref{eq:slade}}  \}\non \\
\Lscr' &= \{ \Phi \ | \  \Phi \in \Lscr \ \mbox{and} \ \exists \ \mbox{functions } \eta_i \ \mbox{of } x\mi, i\in \Nscr, \ \mbox{so that } \varphi_1,\hdots,\varphi_N \ \mbox{are given by \eqref{eq:slade}}\}. \label{eq:fscrprime}
\end{align}
Denote the efficiencies of the GNE over the class $\Fscr'$ by 
\begin{align}
 \bar{\rho'} := \sup_{x \in \GNE(\Phi), \: \Phi \in \Fscr'} \frac{\Theta(\bar{x})}{\max_{z \in \Cbb}\Theta(z)} \quad
\mbox{and} \quad 
 \underline{\rho'} := \inf_{x \in \GNE(\Phi), \: \Phi \in \Fscr'} \frac{\Theta(\bar{x})}{\max_{z \in \Cbb}\Theta(z)}
\label{eq:rhoprime}
\end{align}
Clearly, $\bar{\rho'} =1$ and $\underline{\rho'} \geq \underline{\rho}$. The best and worst case efficiencies of the \VE, 
\begin{align*}
 \bar{\vartheta'} := \sup_{x \in \VE(\Phi), \: \Phi \in \Fscr'} \frac{\Theta(\bar{x})}{\max_{z \in \Cbb}\Theta(z)} \quad
\mbox{and} \quad 
 \underline{\vartheta'} := \inf_{x \in \VE(\Phi), \: \Phi \in \Fscr'} \frac{\Theta(\bar{x})}{\max_{z \in \Cbb}\Theta(z)}
\end{align*}
are both unity. 

% \section{Does there exist a GNE that is efficient when a VE is not?}
% In general, a GNE may be more or less efficient than a \VE. The reason is explained below.
% 
% A GNE solves $x^g \in \Cbb$
% $$ F(x^g)^T(y -x^g) \geq 0 \qquad \forall \ y \in K(x^g). $$
% A VE solves $$F(x^v)^T(y-x^v) \geq 0 \qquad \forall \ y \in \Cbb. $$
% An efficient point $x$ solves $$ \nabla \Theta(x)^T(y-x) \geq 0 \qquad \forall y \in \Cbb, $$
% where $\nabla \Theta = \nabla ( \sum \varphi_i ) = F + G$. 
% 
% Suppose $x^g$ is a GNE which is efficient. Then 
% $$ F(x^g)^T(y-x^g) + G(x^g)^T(y-x^g) \geq 0 \qquad \forall \ y \in \Cbb$$
% Or $F(x^g) + G(x^g) \in -\Nscr(x^g;\Cbb)$. But $F(x^g) \in -\Nscr(x^g;K(x^g))$, by virtue of being a \GNE.
% Let us take the specific case of $\Cbb = \{x \ | \ x\geq 0, {\bf 1}^Tx \leq C \}$. Suppose 
% $x^g \in \partial \Cbb$. Then $F(x^g) = - \Lambda$. We want 
% $$ F(x^g) + G(x^g) = -\lambda {\bf 1} \implies G(x^g) = \Lambda - \lambda {\bf 1}.$$

%\section{Worst case efficiency} \label{sec:worst}

In the following section the worst case efficiency of the GNE to the class of functions
$\Fscr'$ is addressed. \ie we calculate ratio $\underline{\rho}'$. We find that this ratio is in fact zero, 
indicating that the GNE can be arbitrarily inefficient even while the VE is efficient.

\subsection{Worst case efficiency of the GNE when the VE is efficient} \label{sec:wcgne}
% We show that the worst case efficiency of the GNE for functions in the class
%  $\Fscr'$ (\ie the class for which VE is efficient) is zero. \ie, $\underline{\rho'} =0$.
Recall the linearized objective functions from Section~\ref{sec:wcve} $\widetilde{\Phi}^{\bar{x}}$ where $\bar{x} \in \Cbb$ is the 
point of linearization. 
We first show that a result similar to Theorem~\ref{thm:lscrprime} holds, thanks to which $\underline{\rho'}$ 
is the worst case efficiency of GNE over the class $\Lscr'$. 
\begin{theorem} \label{thm:lscrprimeprime} 
Suppose Assumption~\ref{5-assump} holds. Then the ratio $\underline{\rho'}$ defined in \eqref{eq:rhoprime} is given by 
$$\underline{\rho'} =  \inf_{x \in \GNE(\Phi),\: \Phi \in \Lscr'} \frac{\Theta(x)}{\max_{z\in \Cbb} \Theta(z) }. $$
\end{theorem}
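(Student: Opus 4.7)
The plan is to mimic the structure of the proof of Theorem~\ref{thm:lscrprime}, with the additional check that linearization preserves membership in the subclass $\Fscr'$. Concretely, given a GNE $\bar{x}$ of a game with objectives $\Phi \in \Fscr'$, I will exhibit a game with objectives in $\Lscr'$ for which $\bar{x}$ is also a GNE with efficiency no larger than that of $\bar{x}$ in the original game. Together with the trivial inclusion $\Lscr' \subseteq \Fscr'$, this gives the desired equality.

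The first step is to fix $\Phi \in \Fscr'$ and $\bar{x} \in \GNE(\Phi)$ and form the linearization $\widetilde{\Phi}^{\bar{x}}=(\widetilde{\varphi}_1^{\bar{x}},\hdots,\widetilde{\varphi}_N^{\bar{x}})$ defined in Section~\ref{sec:wcve}. I then claim that $\widetilde{\Phi}^{\bar{x}} \in \Lscr'$. Since each $\widetilde{\varphi}_i^{\bar{x}}$ is linear, I only need to verify property \eqref{eq:slade}, which by Theorem~\ref{thm:veeff} is equivalent to the identity $-\nabla \widetilde{\Theta}^{\bar{x}} = \widetilde{F}^{\bar{x}}$. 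But $\nabla \widetilde{\Theta}^{\bar{x}}(x) = \sum_{j\in\Nscr}\nabla\varphi_j(\bar{x}) = \nabla\Theta(\bar{x})$ for every $x$, while $\widetilde{F}^{\bar{x}}(x) = F(\bar{x})$; since $\Phi \in \Fscr'$ Theorem~\ref{thm:veeff} supplies $-\nabla\Theta(\bar{x}) = F(\bar{x})$, so the identity holds on the nose. Hence $\widetilde{\Phi}^{\bar{x}} \in \Lscr'$.

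The second step is the GNE analogue of Lemma~\ref{lem:subset}: I claim that $\bar{x} \in \GNE(\widetilde{\Phi}^{\bar{x}})$. Let $\Lambda = (\lambda_1,\hdots,\lambda_N)$ be the multipliers certifying that $\bar{x}$ is a GNE of the original game. Since $\nabla_i \widetilde{\varphi}_i^{\bar{x}}(x) = \nabla_i\varphi_i(\bar{x})$ for every $x$, the KKT conditions
\begin{align*}
0 \leq \bar{x}_i &\perp -\nabla_i\widetilde{\varphi}_i^{\bar{x}}(\bar{x}) + \lambda_i \geq 0,\\
0 \leq \lambda_i &\perp C - \bfone^T\bar{x} \geq 0,\qquad \forall\ i \in \Nscr,
\end{align*}
are satisfied by $(\bar{x},\Lambda)$, so $\bar{x}$ is a GNE of the linearized game with the same multipliers.

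The third step is to compare efficiencies. Applying Lemma~\ref{lem:linear} to the original $\Theta$ evaluated at $\bar{x}$ gives
\begin{equation*}
\frac{\Theta(\bar{x})}{\max_{z \in \Cbb}\Theta(z)} \;\geq\; \frac{\nabla\Theta(\bar{x})^T\bar{x}}{\max_{z\in\Cbb}\nabla\Theta(\bar{x})^Tz} \;=\; \frac{\widetilde{\Theta}^{\bar{x}}(\bar{x})}{\max_{z\in\Cbb}\widetilde{\Theta}^{\bar{x}}(z)}.
\end{equation*}
The right-hand side is precisely the efficiency of $\bar{x}$ in the linearized game, which by the preceding two steps is a GNE efficiency of some game in $\Lscr'$. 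Taking infimum over $\bar{x} \in \GNE(\Phi)$ and $\Phi \in \Fscr'$ yields $\underline{\rho'} \geq \inf_{x\in\GNE(\Phi),\Phi\in\Lscr'} \Theta(x)/\max_{z\in\Cbb}\Theta(z)$. The reverse inequality is immediate from $\Lscr' \subseteq \Fscr'$, which completes the proof.

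The main obstacle I anticipate is the verification in the first step that linearizing objectives in $\Fscr'$ lands back in $\Fscr'$ (equivalently $\Lscr'$); this is where the extra structure $-\nabla\Theta = F$ from Theorem~\ref{thm:veeff} is essential, and without it the natural linearization argument used for Theorem~\ref{thm:lscrprime} would not obviously preserve the defining property \eqref{eq:slade}. Everything else is a transcription of the VE argument into the GNE setting, using the fact that the nonuniform multiplier vector $\Lambda$ is preserved verbatim when passing to $\widetilde{\Phi}^{\bar{x}}$.
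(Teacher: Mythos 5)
Your proposal is correct and follows essentially the same route as the paper's (sketched) proof: linearize at the GNE, verify via the identity $-\nabla\Theta = F$ that the linearization lands in $\Lscr'$, transfer the KKT certificate to see $\bar{x}$ remains a GNE, bound the efficiency with Lemma~\ref{lem:linear}, and close with $\Lscr' \subseteq \Fscr'$. You merely fill in the detail the paper only asserts, namely that linearization preserves membership in the primed class.
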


Once again we obtain that the linear case is where the worst case of efficiency is achieved. We now show through an example 
that this efficiency $\underline{\rho'}$ is indeed zero.

\begin{examplec} \label{ex:wcgne}
Consider a game with objective functions $\varphi_i(x) = d_i^Tx$ for each $i \in \Nscr$, where $d_i = (d^1_i,\hdots,d^N_i) \in \Real^N$ 
such that $d :=\nabla \Theta = \sum d_i \geq 0$ and $d_i^i > 0$ for all $i \in \Nscr$. 
It is easy to see this collection of objective function 
$\Phi$ satisfies Assumption~\ref{5-assump}. 
Let $c_i = d^i_i$ and $$c := \pmat{c_1 \\ \vdots \\ c_N}= \pmat{\nabla_1 \varphi_1 (x) \\ \vdots \\ \nabla_N \varphi_N(x)} = F(x), \qquad \forall\ x.$$ Furthermore we assume 
$$ \nabla \Theta = \sum d_i = c,$$
that is $\sum_{j \neq i} d_i^j = 0$ for all $i.$
Let $\GNE(c)$ denote the set of GNEs of this game. Recall that $x$ is a GNE if and only if there exists $\Lambda = (\lambda_1, \hdots,\lambda_N)^T$ such that 
\begin{align*}
 0  \leq x & \perp -c + \Lambda \geq 0, \\
 0 \leq \Lambda & \perp \bfone (C - \bfone^T x)\geq 0.
\end{align*}
Observe that any $\Lambda$ satisfying these equilibrium conditions must be strictly positive in all components 
because $c$ is assumed to be so. This implies that $\bfone^T x =C$ must hold for any GNE $x$. In fact, 
$\bfone^Tx = C$ along with $x\geq 0$ is also sufficient for $x$ to be a \GNE; the equilibrium conditions are then satisfied by $\Lambda =c$. 
So we get 
$$\GNE(c) = \{x \ | \ \bfone^Tx = C,\ x\geq 0 \}.$$
%$$ \rho = \min_{x \in \GNE(c),\: c \geq 0} \frac{c^Tx}{\max_{z \in \Cbb} c^Tz}.$$
% Suppose $x_i \in \Real$ and $\Cbb = \{ x \ | \ x\geq 0, \sum x_i \leq C \}$. Then for any $x \in \partial \Cbb$, 
% $$\mathscr{N}(x;K(x)) \subseteq  \{(\lambda_1, \hdots, \lambda_N)  \ | \ \lambda_1,\hdots , \lambda_N\geq 0\}. $$ 
% $x \in \GNE(c)$ iff $c \in \Nscr(x,K(x))$. It follows that for any $c \geq 0$, every point $x \in \partial \Cbb$ is a \GNE. 
Now let $1>\varepsilon >0$ and take $c =c_*:= (1,\varepsilon,\varepsilon,\hdots,\varepsilon)$. 
The point $x^* := \left(0, \hdots, 0, C\right)$ is a GNE for this $c$. 
The worst case efficiency of the GNE no greater than the efficiency of the GNE $x^*$ for the game with $c = c_*$. Therefore, 
$$ \underline{\rho'} \leq \frac{c^T_*x^*}{\max_{z \in \Cbb}c^T_*z} = \frac{C\varepsilon}{C} =\varepsilon.$$
Evidently, since $\varepsilon$ may take any value in $(0,1)$, the worst case efficiency $\underline{\rho'} =0$. 
In other words, the GNE can be arbitrarily inefficient. 
\end{examplec}

Combining the above example with the best case efficiency of the GNE (\ie $\varepsilon =1$), we get our final result. 
\begin{theorem}
Suppose Assumption~\ref{5-assump} holds. The worst case efficiencies of the GNE over the class $\Fscr$ and over class $\Fscr'$ are both zero. 
This bound is tight, in the sense that for every $\varepsilon \in (0,1]$ there is a game satisfying Assumption~\ref{5-assump} 
for which every VE is efficient but has a GNE that has efficiency $\varepsilon$.  
\end{theorem}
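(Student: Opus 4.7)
The proof naturally splits into the ``worst-case is zero'' assertion over both $\Fscr$ and $\Fscr'$, and the tightness assertion. For the $\Fscr$ half, the plan is to do essentially no work: since every VE is a GNE, we have $\underline{\rho} \le \underline{\vartheta}$, and Theorem~\ref{thm:wcve} already gave $\underline{\vartheta}=0$ over $\Fscr$; combined with nonnegativity of efficiency this yields $\underline{\rho}=0$. The substantive content is the $\Fscr'$ statement, since in that class every VE is efficient, so the worst-case VE bound cannot be reused.

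For the $\Fscr'$ half, my plan is to exhibit, for each $\varepsilon \in (0,1)$, a game \emph{in the class $\Fscr'$} whose GNE set contains an allocation of efficiency exactly $\varepsilon$. Following Example~\ref{ex:wcgne}, I take linear utilities $\varphi_i(x) = d_i^T x$ and choose the coefficients so that (i) $d_i^i > 0$, (ii) $\nabla\Theta = \sum_i d_i$ is componentwise nonnegative, and crucially (iii) $\sum_{k\neq i} d_k^i = 0$ for every $i$, i.e.\ $\nabla_i\Theta = \nabla_i \varphi_i$ for all $i$. Condition (iii) is exactly what forces $F \equiv -\nabla\Theta$, and Theorem~\ref{thm:veeff} then certifies $\Phi \in \Fscr'$, so every VE of this game is efficient. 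Concretely I will pick $c = (1,\varepsilon,\dots,\varepsilon)$ as in the example, fix $d_i^i = c_i$, and distribute the remaining mass in the $d_i$'s only over indices $j$ with $j \neq i$ in a way that cancels when summed, so that (iii) holds; a convenient choice is $d_i$ supported on coordinates $\{i\}\cup \{\text{some partner}\}$ with balancing sign, but any assignment with $\sum_{k\ne i} d_k^i=0$ works.

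Next I read off the GNE set. Because $F(x)\equiv c > 0$, the KKT conditions of (A$_i$) force $\lambda_i > 0$ and hence $\mathbf{1}^\top x = C$; conversely any nonnegative $x$ with $\mathbf{1}^\top x = C$ is a GNE with $\Lambda = c$. In particular $x^* := (0,\dots,0,C)$ is a GNE. The social optimum has value $\max_{z\in\Cbb}\nabla\Theta^\top z = C \cdot \max_j c_j = C$ (attained by allocating all of $C$ to coordinate $1$), while $\Theta(x^*) = c^\top x^* = C\varepsilon$. Thus this particular GNE has efficiency exactly $\varepsilon$, yielding $\underline{\rho'} \le \varepsilon$ for every $\varepsilon \in (0,1)$, hence $\underline{\rho'}=0$.

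Finally, for tightness at $\varepsilon = 1$: any game in $\Fscr'$ admits a VE by Rosen's existence theorem under Assumption~\ref{5-assump}, and that VE is efficient and is also a GNE, so efficiency $1$ is achieved inside $\Fscr'$. Combining the two ranges gives every $\varepsilon \in (0,1]$ as claimed. The one place where I expect to be careful is step (iii) above: I must genuinely verify the constructed $d_i$'s satisfy $\sum_{k\neq i} d_k^i = 0$ and not merely $\sum_{j\neq i} d_i^j = 0$, since only the former is equivalent to $-\nabla\Theta = F$ and hence to membership in $\Fscr'$ via Theorem~\ref{thm:veeff}; this indexing check is the main subtle point of the argument.
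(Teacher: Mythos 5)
Your proof is correct and takes essentially the same route as the paper: the paper likewise disposes of the class $\Fscr$ via Theorem~\ref{thm:wcve}, and establishes the $\Fscr'$ claim through Example~\ref{ex:wcgne} with the identical choice $c_*=(1,\varepsilon,\dots,\varepsilon)$ and GNE $x^*=(0,\dots,0,C)$, combining this with the unit-efficiency VE for the case $\varepsilon=1$. Your ``indexing check'' is well taken --- the condition equivalent to $-\nabla\Theta=F$ is indeed the vanishing of the off-diagonal \emph{column} sums $\sum_{k\neq i}d_k^i=0$ (the paper's phrasing transposes the indices), and the simplest admissible instantiation is just $d_i^j=0$ for $j\neq i$.
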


This is a particularly surprising result for it clearly shows the disparity between the VE and GNE. Example~\ref{ex:wcgne} also 
indicates the cause for loss of efficiency in the GNE. Games like these often admit a manifold of GNEs. In Example~\ref{ex:wcgne}, 
the set $\{ x\ | \ x\geq 0, \bfone^Tx = C\}$ is this manifold. When the utility functions are linear, it is always possible to find 
a GNE that allocates zero quantity the player with the highest contribution to $\Theta$ (\ie 
the player $i$ with the largest $\nabla_i \varphi_i$). One may also conclude that this, somewhat pathological property 
of the GNE, makes it unattractive as a solution concept for shared-constraint resource allocation games.

\section{Remedying zero worst case efficiency} \label{sec:remedy}
In this section we discuss some ways by which the possibility of arbitrarily low efficiency in the case of the GNE and the VE 
can be remedied.
\def\Fscrab{{\Fscr^{[\alpha,\beta]}}}
\subsection{Utility functions with bounded gradients}
Looking back at Example~\ref{ex:wcgne} we see that the limiting efficiency of zero is achieved 
as the gradient of $\Theta$ approaches the zero-vector. In the case of linear utilities, this corresponds to the possibility 
of constant zero utility. This is perhaps a pathological situation that does not occur in realistic settings. 
This observation also suggests a way for salvaging nonzero efficiency. 
Consider the following class of functions for $0<\alpha <\beta <\infty$,
$$\Fscr^{[\alpha,\beta]}  = \{ \Phi \ | \ \Phi \in \Fscr \ \mbox{and} \ \beta\bfone \geq \nabla \Theta(x) \geq \alpha\bfone \
\quad \forall\ 
x \in \Cbb \}. $$
Utility functions $\Phi$ belonging to this class have the gradients of their sum bounded above and below. 
%While for a given 
%continuously differentiable function $\Theta$, a finite bound on the gradient over the compact domain $\Cbb$ always exists. 
%$\Fscrab$ is a collection of functions, gradients of all of which are bounded by the same bound. 
Let $\Phi \in \Fscr$. Recall Lemma~\ref{lem:linear}, which showed that for any $x \in \Cbb$, 
the ratio $\frac{\Theta(x)}{\Theta(x^*)}$, where $x^* \in \SOL\SYS$ is was bounded below by the ratio $\frac{\nabla
\Theta(x)^Tx}{\nabla \Theta(x)^Tx^\ell}$ 
where $x^\ell$ solves (SYS$^\ell(x)$). 
Now if $\Phi \in \Fscrab$, then for any $x \in \Cbb$,
$$\frac{\Theta(x)}{\Theta(x^*)} \geq \frac{\alpha \bfone^Tx}{\beta \bfone^Tx^{\ell}}.$$
Furthermore since Assumption~\ref{5-assump} ensures that $F(x) >0$ whereby, for any GNE $x$, $\bfone^Tx =C$. 
Likewise, for the problem (SYS$^\ell(x)$) optimality of $x^\ell$ implies $\bfone^Tx^\ell =C$
This immediately gives us that 
$$\inf_{x \in \GNE(\Phi)} \frac{\Theta(x)}{\Theta(x^*)} \geq \frac{\alpha}{\beta}.$$
Since this holds for all $\Phi \in \Fscrab$, the worst case efficiency of the GNE over this class is $\frac{\alpha}{\beta}$. 
It follows that the worst case efficiency of the VE over the class $\Fscrab$ is greater than or equal to $\frac{\alpha}{\beta}$. 

\begin{examplec}
Consider the following class of {\em exponential} utility functions.
$$ \varphi_i(x) = 1-\exp(-d_i^Tx), \quad \forall  \ i \in \Nscr, $$
where for each $i$, the vector of coefficients $d_i=(d_i^1,\hdots,d_i^N)$ is chosen 
from a compact set $D \subseteq \Real^N_+$ and with $d_i^i >0$. It follows that these utility functions 
satisfy Assumption~\ref{5-assump}. 
It is easy to see that the following bounds hold.
$$ \alpha\sum d_i \leq \nabla \Theta(x) = \sum_{i \in \Nscr}  d_i\exp(-d_i^Tx) \leq \sum d_i, \qquad \forall \ x \in \Cbb,$$
where  $\alpha$ is given by 
$$\alpha = \min_{(d,z) \in D \times \Cbb} \exp(-d^Tz). $$ 
$\alpha$ is positive because $D$ is compact. Thus this set of utility functions lie in $\Fscr^{[\alpha,1]}.$
It follows that $\ds \frac{\Theta(x)}{\Theta(x^*)} \geq \alpha,$ 
and that the worst case efficiency of the GNE for this class of games is at least $\alpha$. 
\end{examplec}

\subsection{Other notions of efficiency}
The notion of what constitutes efficient allocations varies with the setting in question
and need not always correspond to the allocation that maximizes  the sum of the objective functions of players. 
This is particularly true when what players are maximizing are not merely their utilities but some 
modifications thereof. The system-level goal though, from the point of view of social welfare, remains the maximization of aggregate utility. 
As mentioned in Section~\ref{sec:mechanism}, mechanism design employs such a paradigm. 
The optimization problems that players solve are not of `utility maximization' but instead of 
`{\em payoff maximization}', though the efficiency is 
benchmarked on the aggregate utility. And under this interpretation, greater efficiency may be obtained.
\def\bfC{{\bf C}}

In this section, we study a particular kind of model that uses the above philosophy and provide a 
general principle for the VE to be efficient in this new sense. 
This model is adapted from and is a  somewhat simplified version of the model in 
\cite{alpcan02game}. We first explain the model from~\cite{alpcan02game}. Similar models are 
used in~\cite{kunniyur03endtoend,la00charge,mo00fair}.
\begin{examplec} \label{ex:alpcan}
The game consists of 
$N$ players that attempt to access bandwidth over a general network with a set of links $L$. 
The subset of links used by player $i$ are $L_i$ and the utility received by player 
$i$ is $U_i(x_i)$ for flow $x_i$ that it obtains. 
If $x$ is the vector of typical flows of all players, the 
flows permitted by the network are those in the set $\{ x \ | \ x \geq 0, \bfA x \leq \bfC\}$ where $\bfA$ is a $|L| \times N$ matrix whose
element $\bfA[\ell,i]$ is $1$ if player $i$ uses link $\ell$ and $\bfC$ is the $|L|$-vector of capacities the links.
The network is however subject to delay, the cost of which is seen by a player for every link it uses. 
Specifically, each player faces a cost 
which is the sum of the costs on the links that it uses,
$$c^i(x) = \sum_{\ell \in L_i} c_\ell( \sum_{j : \ell \in L_j }x_j),$$ 
where $c_\ell(t)$ is the cost of using link $\ell$ when the total flow on it is $t.$
Each player is thus faced with the following optimization problem.
$$
        \maxproblem{{\rm AB}$_i(x\mi)$}
        {x_i}
        {\varphi_i(x) := U_i(x_i) - c^i(x)}
                                 {\begin{array}{r@{\ }c@{\ }l}
                \bfA x &\leq& \bfC, \\
                x_i & \geq& 0. 
        \end{array}}
        $$
The game formed from these problems is clearly a \scg, more general than the game we have considered. 
Our game can be thought of a special case of this game with one link, \ie $|L|=1$.

The system-level problem defined\footnote{The constraint `$\bfA x \leq \bfC$' is not used in the system problem in 
\cite{alpcan02game}. This constraint is irrelevant in their case since the costs approach infinity as this constraint becomes 
active.} in~\cite{alpcan02game} however is {\em not} to maximize the sum of the objectives $\sum \varphi_i$, 
but instead to maximize the aggregate utility less the total delay over the network. Specifically, 
an {\em efficient allocation} in this case is the solution of following problem.
$$
        \maxproblem{{\rm SYS-AB}}
        {x}
        {\sum_{i\in \Nscr} U_i(x_i) - \sum_{\ell \in L} c_\ell(\sum_{j: \ell \in L_j} x_j)  }
                                 {\begin{array}{r@{\ }c@{\ }l}
                \bfA x &\leq& \bfC, \\
                x & \geq& 0. 
        \end{array}}
        $$
Notice that while the delay on a particular link $\ell$ appears in the optimization problem of possibly 
multiple players, it is accounted for only once in (SYS-AB). The motivation for this is that, while each player 
that experiences the delay on $\ell$ suffers a cost $c_\ell$ due to it, the 
{\em system} goal is to merely minimize total delay over all \textit{links} (not players), in addition to maximizing aggregate utility. In particular, 
in (SYS-AB) the delay on link $\ell$ is not scaled by the number of 
players using it. Under the assumption that $c_\ell$ approaches infinity as the link $\ell$ gets congested, it is proved 
in~\cite{alpcan02game} that the GNE of this game is efficient. The GNE and the VE coincide in this case 
because they are in the interior of the feasible region~\cite{kulkarni09refinement}. 
\end{examplec}

We derive a general principle within our setting for a result such as that in~\cite{alpcan02game} 
to hold. We stick to the solution concept of the VE and show that for utility functions 
taken from the class $\Fscr'$, the VE of the \scg is optimal for the system problem analogous 
to the one above. 

Consider the game where player $i$ derives utility  $U_i(x)$ from an 
allocation $x$ but he suffers a cost $\tau(x)$, which we assume is a convex function of $x$. 
Player $i$'s objective then is to solve 
$$
        \maxproblem{A$_i(x\mi)$}
        {x_i}
        {U_i(x) - \tau(x)}
                                 {\begin{array}{r@{\ }c@{\ }l}            
                \sum_{j \in \Nscr} x_j & \leq & C, \\
                x & \geq& 0. 
        \end{array}}
        $$
(SYS) problem  is changed to the following.
$$
        \maxproblem{{\rm SYS}}
        {x}
        {\sum_{j\in \Nscr} U_j(x) - \tau(x) }
                                 {\begin{array}{r@{\ }c@{\ }l}            
                \sum_{j \in \Nscr} x_j & \leq & C, \\
                x & \geq& 0. 
        \end{array}}
        $$
\def\bfU{{\bf U}}

Let $\bfU$ be the tuple $(U_1,\hdots,U_N)$ and let $F$ be as before,
$$ F= - \pmat{\nabla_1 U_1 -\nabla_1 \tau \\ \vdots \\ \nabla_N U_N -\nabla_N \tau }.$$
 We now ask the question: for what utility functions $\bfU$ 
does the VE of the \scg formed from (A$_1),\hdots,($A$_N$) solve \SYS?
Similar to the argument in Section~\ref{sec:best}, suppose there exists a concave continuously differentiable 
function $f$ such that $-\nabla f = F$. It follows that there exist for each $i$ 
continuously differentiable functions $\eta_i$ of $x\mi$ such that
$$f(x) = U_i(x) - \tau(x) + \eta_i(x\mi) \qquad \forall \ x.$$
For the VE to solve \SYS, we need that $f =\Theta$, where $\Theta$ now is 
$$\Theta = \sum_{j \in \Nscr} U_j -\tau.$$
$\Theta$ is a concave function if $\bfU$ satisfies Assumption~\ref{5-assump}. 
Substituting $f=\Theta$ above gives $$ \sum_{j \in \Nscr} U_j(x) =  U_i(x) + \eta(x\mi), \qquad \forall \ x,$$
for all $i$. The above equation is the same as \eqref{5-eq:2}, but for functions $\bfU$ instead of $\Phi$. 
It follows that for each $i$, $U_i$ must be of the form given by \eqref{eq:slade}, and in particular must belong to class $\Fscr'$. This is summarized 
in the following result. 

\begin{theorem}
 Suppose $\bfU$ satisfies Assumption~\ref{5-assump} and consider the game formed from 
 $({\rm A}_1),$ $\hdots,({\rm A}_N)$ described above.  The identity $-\nabla \Theta = F$ holds if and only if $\bfU$ belongs
to the class  
 $\Fscr'$. 
\end{theorem}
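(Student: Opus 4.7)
The plan is to observe that the cost term $\tau$ cancels out in the identity $-\nabla \Theta = F$, which reduces the claim to a direct application of Theorem~\ref{thm:veeff} to the utility tuple $\bfU$. I will prove both directions in parallel by first massaging the identity into a $\tau$-free form.

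First, I would compute both sides componentwise. The $i$-th coordinate of $F$ is $F_i(x) = -\nabla_i U_i(x) + \nabla_i \tau(x),$ while
\begin{equation*}
-\nabla_i \Theta(x) = -\nabla_i \sum_{j\in\Nscr} U_j(x) + \nabla_i \tau(x).
\end{equation*}
Setting these equal for each $i\in\Nscr$ and for all $x$, the contribution of $\tau$ cancels, leaving the condition $\nabla_i U_i(x) = \nabla_i \sum_{j \in \Nscr} U_j(x),$ or equivalently $\nabla_i \bigl(\sum_{j\neq i} U_j\bigr)(x) = 0$ for every $i\in\Nscr$. Thus the identity $-\nabla \Theta = F$ in this cost-augmented game is equivalent to the identity $-\nabla \Theta_{\bfU} = F_{\bfU}$ in a \emph{cost-free} shared-constraint game with utility tuple $\bfU$, where $\Theta_{\bfU} := \sum_{j} U_j$ and $F_{\bfU} := -(\nabla_1 U_1,\dots,\nabla_N U_N)^T$.

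At this point the problem reduces to a direct invocation of Theorem~\ref{thm:veeff}, which characterizes exactly those utility tuples satisfying $-\nabla\Theta_{\bfU} = F_{\bfU}$ as those admitting continuously differentiable functions $\eta_i$ of $x^{-i}$, $i\in\Nscr$, for which each $U_i$ obeys the representation \eqref{eq:slade}. By definition \eqref{eq:fscrprime}, this is precisely the statement $\bfU \in \Fscr'$. This yields both the ``$\Longrightarrow$'' direction (starting from $-\nabla\Theta = F$, the equivalence above plus Theorem~\ref{thm:veeff} delivers the $\eta_i$) and the ``$\Longleftarrow$'' direction (given the $\eta_i$-representation, Theorem~\ref{thm:veeff} supplies $-\nabla\Theta_{\bfU} = F_{\bfU}$, and then reintroducing the $\tau$-terms on both sides gives $-\nabla\Theta = F$).

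There is no real obstacle here; the only substantive step is the algebraic cancellation of $\tau$, and care must be taken that Assumption~\ref{5-assump} applies to $\bfU$ (which is granted in the hypothesis), so that Theorem~\ref{thm:veeff} is genuinely available. In particular, the concavity of $\Theta_{\bfU} = \sum U_j$ follows because $\Theta = \Theta_{\bfU} - \tau$ is concave by Assumption~\ref{5-assump} and $\tau$ is convex, so $\Theta_{\bfU} = \Theta + \tau$ is concave. With this minor check completed, the proof is complete in a few lines.
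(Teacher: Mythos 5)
Your proof is correct and follows essentially the same route as the paper: cancel the $\tau$-contribution so that $-\nabla\Theta = F$ reduces to $\nabla_i\bigl(\sum_{j}U_j\bigr) = \nabla_i U_i$ for all $i$, and then invoke Theorem~\ref{thm:veeff} for the cost-free tuple $\bfU$; the paper performs the same cancellation one step later, via the potential-function identity $f(x) = U_i(x) - \tau(x) + \eta_i(x^{-i})$ with $f=\Theta$, but the content is identical. One small correction: your closing justification of the concavity of $\Theta_{\bfU}=\sum_j U_j$ is backwards --- a concave function plus a convex function need not be concave, so you cannot deduce concavity of $\Theta+\tau$ that way; fortunately no deduction is needed, since the hypothesis that $\bfU$ satisfies Assumption~\ref{5-assump} directly asserts that the aggregate $\sum_j U_j$ is concave (and it is $\Theta=\Theta_{\bfU}-\tau$ whose concavity follows from that together with convexity of $\tau$, not the other way around).
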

It is easy to see that the game of Example~\ref{ex:alpcan} follows as a special of this theorem.

\subsection{Reserve price}
So far we have assumed that there is no administrative intervention in the allocation of the resources. 
A consequence of this is the possibility that, at equilibrium, ``non-serious'' players with low marginal utility get
 large portions of the resource. An apt instance of this situation is the game in 
Example~\ref{ex:wcgne}, where at equilibrium, the player with the highest marginal utility amongst all players 
gets no portion of the resource while the player with lowest marginal utility gets all of it. The system 
problem, on the other hand, allocates all the resource to player with highest marginal utility. Since the lowest 
marginal utility could be arbitrarily small, the worst case efficiency is zero. 

One way to mitigate this problem is to {\em screen} players so that, at equilibrium,  only those players 
play the game who are ``sufficiently serious'' about wanting the resource. We show here that this can be 
done with some, though minimal administrative intervention and that under certain assumptions, efficiency as high as unity is 
achievable. Underlying this approach, is the concept 
of a {\em reserve price}. 

The idea of a reserve price or reservation price can be traced to Myerson's seminal work on optimal auction design for a single indivisible 
object in a Bayesian setting~\cite{myerson81optimal}. In this setting, 
a reserve price is set by the auctioneer or seller and revealed to all players with the rider that 
only those bids will be considered in the auction that are at least large as the reserve price. As a consequence, 
players whose valuations do not exceed the reserve price are eliminated from the auction. The seller risks 
keeping the object to himself in the event that the all players bid below this price but he also has a chance for higher revenue\footnote{See~\cite{engelbrecht-wiggans87optimal,nisan07algorithmic} for an analysis of the revenue-optimal 
reserve price in auctions.}. 

In the context of allocation of divisible resources that we are concerned with, the reserve price is typically implemented 
through the pricing formula or mechanism employed, as has been done, \eg, by Maheswaran and Ba\c{s}ar~\cite{maheswaran03nash,maheswaran06efficient}. 
In our setting of a \scg we do not have a pricing formula or a direct handle on the price. We impose the price through 
a cost or a toll. The administrator fixes a value $\pi$ so that each player $i$ that receives quantity $x_i$  
is charged a cost $\pi x_i$. This cost may be monetary and collected by an administrator or it may 
be a cost induced by a physical inconvenience such as delay which does not require imposition by an administrator. 
We do not concern ourselves with the manner of imposition of this cost, but only note that even in the case where 
an administrator imposes it, it can be imposed with minimal intervention.
As a result of the reserve price, player $i$ now solves the following optimization problem.
$$
        \maxproblem{A$_i(x\mi)$}
        {x_i}
        {\varphi_i(x) - \pi x_i}
                                 {\begin{array}{r@{\ }c@{\ }l}            
                \sum_{j \in \Nscr} x_j & \leq & C, \\
                x_i & \geq& 0. 
        \end{array}}
$$
We benchmark efficiency with respect to the original system problem.
$$
        \maxproblem{{\rm SYS}}
        {x}
        {\sum_{j\in \Nscr} \varphi_j(x)  }
                                 {\begin{array}{r@{\ }c@{\ }l}            
                \sum_{j \in \Nscr} x_j & \leq & C, \\
                x & \geq& 0. 
        \end{array}}
        $$

Consider utility functions $\Phi \in \Fscr'$. The argument in Section~\ref{sec:wcgne} shows, it suffices to consider 
utility functions $\Phi \in \Lscr'$ to evaluate the worst case efficiency, so we restrict ourselves to linear 
utility functions. In particular, consider 
for each $i$, $\varphi_i(x) = d_i^Tx$ for all $x$ and $d_i^i>0$. Notice that for any $i$, coefficients 
$d_i^j$, with $j\neq i$ do not affect optimal choice of player $i$ and the resulting equilibrium. For simplicity, we consider `perfectly competitive' utility functions, \ie, $d^j_i =0$ for $j \neq i.$

If the reserve price is larger than 
the marginal utility of a player $\hat{i}$, \ie $\pi > c_{\hat{i}}$, then it is optimal for $\hat{i}$ 
to ask for zero quantity. For the other players, the shared constraint is now the set 
$$ \left\{x^{-\hat{i}}\  \left\lvert  \ x^{-\hat{i}} \geq 0, \sum_{j \neq \hat{i}} x_j \leq C \right. \right\}. $$
Effectively, player $\hat{i}$ is eliminated and we are left with a game similar to the original game, but 
amongst players $\Nscr \backslash \hat{i}$. 
The imposition of a reserve price filters players with ``low interest'' 
in the resource and retains only those players who gain utility at least $\pi$ from 
a unit of the resource. 

There is, of course, a possibility of {\em overdoing} 
this elimination by eliminating {\em all} players in the competition. 
Indeed, for a given $c$, one can always find a price 
$\pi = \pi^*,$ where $\pi^*  > \max_i c_i$ so that for  each player, it is optimal to demand zero, 
and no resource is allocated. \ie the allocation $(0,\hdots,0)$ is the (unique) equilibrium. 
This is akin to the nonzero probability in the Bayesian single object auction, that the seller has to keep the item for itself with 
zero revenue. In that setting, one can counter possibility by arguing that the {\em expected} profit of the seller is 
positive for a certain price~\cite{nisan07algorithmic}. 
To rule out this possibility in our setting, we are compelled to assume that there is at least one player who is not eliminated. With 
this assumption, the imposition of a reserve price leads to improvement in efficiency. Indeed arbitrarily high efficiencies 
are achievable. We show this below.

Consider the GNE as a solution concept and assume that $\Phi \in \Lscr'$ is given by linear functions as above. 
In particular, let $c$ be such that $c_1=\max_i c_i =1$ and let the reserve price $\pi$ be a number in $(0,1)$. 
Let $x$ be a GNE of this game. \ie, suppose there exists $\Lambda = (\lambda_1,\hdots,\lambda_N)$ such that 
\begin{align*}
 0 \leq x &\perp -c + \pi \bfone + \Lambda \geq 0 \\ 
 0 \leq \Lambda & \perp C - \bfone^Tx \geq 0
\end{align*}
Since $\pi<1=c_1$, at equilibrium, the Lagrange multiplier $\lambda_1$ for player $1$, 
has to be positive. The complementarity between $\lambda_1$ and `$C -\bfone^Tx$' now ensures 
that $\bfone^Tx =C$ holds for the GNE $x$. Therefore at least one component of $x$ is positive. 
If $x$ is a VE, \ie a GNE with $\lambda_1 = \lambda_j$ 
for all $j$, those players $i$ with $c_i = c_1$ receive positive quantity, whereas the rest receive zero. 
If $x$ is a GNE, denote by $\Iscr$ the set of players 
with marginal utility at least $\pi$, \ie,
$$ \Iscr = \{ i \in \Nscr\ | c_i \geq \pi \}, \quad \aur~\mbox{let} \quad c_{\Iscr} = (c_i)_{i \in \Iscr}, \quad x_{\Iscr} = (x_i)_{i \in \Iscr}, $$
then $x_{\Iscr^c}$, which is the tuple of $x_j$ so that $c_j < \pi$, is the tuple of $|\Iscr^c|$ zeroes.  Consequently, 
$\sum_{j\in \Iscr} x_{j} =C$.
Therefore for this game and a GNE $x$,   
$$ \frac{\Theta(x)}{\max_{z \in \Cbb} \Theta(z)} = \frac{c^Tx}{\max_{z \in \Cbb} c^Tz} = \frac{c_{\Iscr}^Tx_{\Iscr}}{C} \geq
\frac{\pi \sum_{j \in \Iscr}x_{j}}{C}=\pi.$$ 
\ie the efficiency of the GNE is at least $\pi$. Whereas every VE $x$ is efficient:
$$ \frac{\Theta(x)}{\max_{z \in \Cbb} \Theta(z)} = \frac{c^Tx}{\max_{z \in \Cbb} c^Tz} = \frac{c_1\bfone^Tx}{C} =1.$$
This is true for for all $c >0$ such that $\max_i c_i =1$ 
when the reserve price $\pi <1$. More generally, for any $c>0$ and a reserve price $\pi$, we have 
$$ \inf_{x \in \GNE(c)} \frac{\Theta(x)}{\max_{z \in \Cbb} \Theta(z)} \ \  \begin{cases}
                                                                 \geq \frac{\pi}{\max_i c_i} \quad &        \pi < \max_i c_i, \\
                                                                  = 0 \quad & \pi \geq \max_i c_i.
                                                                        \end{cases}
$$
When $\max_i c_i =c_1 = 1$ and $\pi \in (0,1)$, note that $\pi$ is a lower bound and the actual efficiency may in fact be greater than $\pi$. 
For \eg, consider a game where $\max_{j \neq 1} c_j < 1$ and $1>\pi > \max_{j \neq 1} c_j$. 
For this game, all players other than player $1$ are ``eliminated'', \ie they receive zero quantity at equilibrium. 
Therefore the efficiency of the GNE in this case is unity.

\section{Conclusions} \label{sec:final}
This work considered resource allocation through a \scg from the point of view of 
economic efficiency. This game is relevant 
in the setting where an auctioneer is not available for operationalizing a mechanism. 
We clarified the relationship of this game with other modes of allocating resources. 
\Scgs admit two kinds of equilibria, namely, the GNE and the VE. We 
considered a class of concave objective functions and found that the worst case 
efficiency over this class of both, the GNE and the VE, is zero. However we show that 
there is subclass for which the VE is always efficient but the GNE can be arbitrarily inefficient. This further corroborates the thesis put forth in~\cite{kulkarni09refinement} that the VE should be considered a refinement of the GNE. 
We then discussed remedies by which the worst case efficiency may be bounded away from zero.
Specifically, we showed that  utility functions with bounded gradients, alternative notions of 
efficiency and the imposition of a reserve price can mitigate the loss of efficiency.

\appendix
\subsection{Proof of Lemma~\ref{lem:linear}}
\begin{proof}
Since $\Theta$ is concave, the following inequality holds:
\begin{equation} 
\Theta(\bar{x}) \leq \Theta(x) + \nabla \Theta(x)^T(\bar{x}-x). \label{eq:concave}
\end{equation} Now consider the ratio $\frac{\Theta(x)}{\Theta(x^*)}$. 
Adding and subtracting $\nabla \Theta(x)^Tx$ in the numerator and using \eqref{eq:concave} with $\bar{x} = x^*$,  it follows
that
this ratio satisfies  
\begin{equation}
\frac{\Theta(x)}{\Theta(x^*)} \geq \frac{[\Theta(x) - \nabla \Theta(x)^Tx] + \nabla\Theta(x)^Tx}{[\Theta(x) -
\nabla\Theta(x)^Tx] + \nabla\Theta(x)^Tx^*}.  \label{eq:concave1}
\end{equation}
By definition of $x^\ell$, $\nabla\Theta(x)^Tx^\ell \geq \nabla \Theta(x)^Tx^*,$
and since both these terms are positive, using this inequality in \eqref{eq:concave1} gives
\begin{equation}
\frac{\Theta(x)}{\Theta(x^*)} \geq \frac{[\Theta(x) - \nabla \Theta(x)^Tx] + \nabla\Theta(x)^Tx}{[\Theta(x) -
\nabla\Theta(x)^Tx] + \nabla\Theta(x)^Tx^\ell}\label{eq:concave2} \end{equation}
Now, from \eqref{eq:concave}, taking $\bar{x} =0$, we get
\beq
\Theta(x) -\nabla \Theta(x)^Tx \geq \Theta(0), \label{eq:concave0}
\eeq which is nonnegative, by Assumption~\ref{5-assump}. Furthermore Assumption~\ref{5-assump} also provides that 
$\nabla \Theta(x)$ is nonnegative whereby $\nabla \Theta(x)^Tx$ and $\nabla\Theta(x)^Tx^\ell$ are both nonnegative. 
So therefore, dropping the nonnegative term `$\Theta(x) - \nabla\Theta(x)^Tx$' from the numerator  
and denominator of \eqref{eq:concave2} and recalling that $\nabla \Theta(x)^Tx^\ell \geq \nabla \Theta(x)^Tx$, 
we obtain,
$$ \frac{\Theta(x)}{\Theta(x^*)} \geq \frac{\nabla \Theta(x)^Tx}{\nabla\Theta(x)^Tx^\ell}, $$
which is the desired result.
\end{proof}
\subsection{Proof of Theorem~\ref{thm:lscrprime}}
\begin{proof}
Lemma~\ref{lem:linear} shows that for  $\Phi \in \Fscr$ and $\bar{x} \in \VE(\Phi)$,
\begin{align*} 
\frac{\Theta(\bar{x})}{\max_{z \in \Cbb}\Theta(z)} &\geq \frac{\widetilde{\Theta}^{\bar{x}}(\bar{x})}{\max_{z \in
\Cbb}\widetilde{\Theta}^{\bar{x}}(z)}. 
\end{align*}
So for any $\Phi \in \Fscr$,\begin{align*} 
\inf_{\bar{x} \in \VE(\Phi)} \frac{\Theta(\bar{x})}{\max_{z \in \Cbb}\Theta(z)} &\geq \inf_{\bar{x} \in \VE(\Phi)} 
\frac{\widetilde{\Theta}^{\bar{x}}(\bar{x})}{\max_{z \in \Cbb}\widetilde{\Theta}^{\bar{x}}(z)} \\
&\geq \inf_{\bar{x}\in \VE(\widetilde{\Phi}^{x}),\: x\in \VE(\Phi)}  \frac{\widetilde{\Theta}^{\bar{x}}(\bar{x})}{\max_{z \in
\Cbb}\widetilde{\Theta}^{\bar{x}}(z)}, \quad \mbox{by Lemma~\ref{lem:subset}},\\ 
&= \inf_{\bar{x}\in \VE(\widetilde{\Phi}^{x}),\: x\in \VE(\Phi)}  \frac{\sum_{i \in \Nscr}\widetilde{\varphi}_i^{\bar{x}}(\bar{x})}{\max_{z \in \Cbb}\sum_{i \in \Nscr} \widetilde{\varphi_i}^{\bar{x}}(z)}, \\
&\geq  \min_{\bar{x} \in \VE(\Phi),\: \Phi \in \Lscr} \frac{\Theta(\bar{x})}{\max_{z \in \Cbb}\Theta(z)}, \quad \mbox{since }
\{ \widetilde{\Phi}^x : x \in \VE(\Phi) \} \subseteq \Lscr.
\end{align*}
Since this holds for any $\Phi \in \Fscr$, we must have
$$\inf_{\bar{x} \in \VE(\Phi), \: \Phi \in \Fscr} \frac{\Theta(\bar{x})}{\max_{z \in \Cbb}\Theta(z)} \geq \inf_{\bar{x} \in
\VE(\Phi),\: \Phi \in \Lscr} \frac{\Theta(\bar{x})}{\max_{z \in \Cbb}\Theta(z)}. $$
But since $\Lscr \subseteq \Fscr$, we must also have
$$\inf_{\bar{x} \in \VE(\Phi), \: \Phi \in \Fscr} \frac{\Theta(\bar{x})}{\max_{z \in \Cbb}\Theta(z)} \leq \inf_{\bar{x} \in
\VE(\Phi),\: \Phi \in \Lscr} \frac{\Theta(\bar{x})}{\max_{z \in \Cbb}\Theta(z)}. $$
The result follows.
\end{proof}

\subsection{Proof of Theorem~\ref{thm:lscrprimeprime} }
\begin{proof}
The proof is similar to that of Theorem~\ref{thm:lscrprime}, so we will only sketch it. 
By repeating the arguments in Lemma~\ref{lem:subset}, one can conclude that for any $\Phi \in \Fscr'$,
 \begin{equation}
\GNE(\Phi) \subseteq \bigcup_{x \in \GNE(\Phi)} \GNE(\widetilde{\Phi}^{x}), \label{eq:vesubset}
 \end{equation}
where the notation $\widetilde{\Phi}^{x}$ stands for the linearized version of $\Phi$, as is Section~\ref{sec:wcve}. 
By Lemma~\ref{lem:linear}, for any $\Phi \in \Fscr'$ and $\bar{x} \in \GNE(\Phi)$,
\begin{align*} 
\frac{\Theta(\bar{x})}{\max_{z \in \Cbb}\Theta(z)} &\geq \frac{\widetilde{\Theta}^{\bar{x}}(\bar{x})}{\max_{z \in
\Cbb}\widetilde{\Theta}^{\bar{x}}(z)}. 
\end{align*}
We see that, if $\Phi$ belongs to the class $\Fscr'$, \ie if $F(x) = -\nabla \Theta(x)$ for all $x$,  then the 
linearization $\widetilde{\Phi}^{\bar{x}}$ belongs to $\Lscr'$. \ie we have $\widetilde{F}^{\bar{x}}(x) = -\nabla
\widetilde{\Theta}^{\bar{x}}(x)$ for all $x$.
Similar to the proof of Theorem~\ref{thm:lscrprime}, using \eqref{eq:vesubset}, we get that for each $\Phi \in \Fscr,$
$$\inf_{\bar{x} \in \GNE(\Phi)} \frac{\Theta(\bar{x})}{\max_{z \in \Cbb}\Theta(z)} \geq \inf_{\bar{x} \in \GNE(\Phi),\: \Phi
\in \Lscr'} \frac{\Theta(\bar{x})}{\max_{z \in \Cbb}\Theta(z)}. $$
Then using that $\Lscr' \subseteq \Fscr$ completes the proof.
\end{proof}
\bibliographystyle{plainini}
\bibliography{ref}
\end{document}